\documentclass[journal]{IEEEtran}
\IEEEoverridecommandlockouts
\usepackage{cite}
\usepackage{amsmath,amssymb,amsfonts}
\usepackage{algorithmic}
\usepackage{graphicx}
\usepackage{textcomp}
\usepackage{xcolor}
\usepackage{amsfonts}
\usepackage{bm}
\usepackage{comment}
\usepackage{booktabs}
\usepackage{algorithm}
\usepackage{multirow,bm,bbm,array,setspace}
\usepackage{textcomp}
\usepackage{psfrag}
\usepackage{pstricks,enumerate}
\usepackage{bbm,subfigure}
\usepackage{dblfloatfix}

\usepackage{array}

\makeatletter
\newcommand{\distas}[1]{\mathbin{\overset{#1}{\kern\z@\sim}}}%
\newsavebox{\mybox}\newsavebox{\mysim}
\newcommand{\distras}[1]{%
  \savebox{\mybox}{\hbox{\kern1pt$\scriptstyle#1$\kern1pt}}%
  \savebox{\mysim}{\hbox{$\sim$}}%
  \mathbin{\overset{#1}{\kern\z@\resizebox{\wd\mybox}{\ht\mysim}{$\sim$}}}%
}
\makeatother
\ifCLASSINFOpdf
\else
\fi

\newcommand{\hh}{\mathrm{H}}
\newcommand{\T}{\mathrm{T}}

\newtheorem{proposition}{Proposition}
\newtheorem{lemma}{Lemma}
\newtheorem{theorem}{Theorem}

\newtheorem{remark}{Remark}

\begin{document}

\title{
Connections Between Quadratic Transform for Fractional Programming and Schur Complement

\author{Kaiming Shen, \IEEEmembership{Senior Member,~IEEE}, Kareem M. Attiah,  \IEEEmembership{Member,~IEEE},\\ Yannan Chen,  \IEEEmembership{Member,~IEEE}, and Wei Yu,  \IEEEmembership{Fellow,~IEEE}
}
\thanks{
Manuscript submitted on \today.
Kaiming Shen is with the School of Science and Engineering, The Chinese University of Hong Kong, Shenzhen, China. 
Yannan Chen is with the College of Electronics and Information Engineering, Shenzhen University, Shenzhen, China. 
Kareem M. Attiah and Wei Yu are with The Edward S. Rogers Sr. Department of Electrical and Computer Engineering, University of Toronto, Toronto, ON M5S3G4, Canada.
(e-mails: shenkaiming@cuhk.edu.cn, kareem.attiah@mail.utoronto.ca, chenyannan@szu.edu.cn, weiyu@ece.utoronto.ca.)

This work has been presented in part at the \emph{IEEE International Symposium on Information Theory (ISIT)}, Guangzhou, China, June 2026 \cite{ShenAttiahChenYu2006}. 
The work of Kaiming Shen and Yannan Chen was supported by the NSFC under Grant 12426306.
The work of Kareem M. Attiah and Wei Yu was supported by an NSERC Discovery Grant RGPIN-2023-04697.
}
}

\maketitle
\begin{abstract}
This paper shows that there are intimate connections between the quadratic
transform technique for solving fractional programming (FP) problems and the
Schur-complement technique in matrix analysis. We demonstrate that the
quadratic transform technique is related to two aspects of the Schur
complement: (i) the linear matrix inequality (LMI) condition for positive
semidefiniteness and (ii) the matrix determinant formula. Specifically, we
establish that the quadratic transform and the Schur-complement LMI condition
imply each other. This connection allows us to provide new interpretations of
the auxiliary variable in the quadratic transform, and it allows us to rederive
the Schur-complement determinant formula. Furthermore, this connection leads
to generalizations of the quadratic transform in FP and the Schur-complement
LMI that can accommodate generalized matrix inverse. 
As an application in information theory, we apply the generalized FP framework 
to the least-favorable-noise minimax formulation of the Gaussian vector broadcast 
channel sum capacity problem. When the least-favorable noise covariance is singular, 
matrix-inverse-based Karush-Kuhn-Tucker (KKT) analysis would require a careful
analysis of the input and output spaces of the channel. 
We show using generalized FP that an auxiliary-variable representation of the singular
matrix fraction directly yields the reciprocal multiple-access channel
and recovers the uplink-downlink duality relation for sum capacity. 

\end{abstract}

\begin{IEEEkeywords}
Fractional programming, quadratic transform, Schur complement, generalized inverse, uplink-downlink duality, Gaussian vector broadcast channel, multiple-access channel.
\end{IEEEkeywords}

\section{Introduction}
\label{sec:intro}

This paper explores a curious connection between two seemingly unrelated topics---the quadratic transform for solving fractional programming (FP) problems \cite{FP_SPM} and the Schur complement in matrix analysis \cite{Fuzhen_book}, both of which have been extensively used in the communication system design. Specifically, we show that the quadratic transform for FP can be obtained from the Schur complement, and vice versa. Furthermore, this connection leads to new interpretations and generalizations of results in FP and Schur complement.

FP is a class of mathematical optimizations that involve ratio terms. It plays a key role in communications and signal processing, because many key performance metrics are fractionally structured, e.g., the signal-to-interference-plus-noise ratio (SINR), Cram\'{e}r-Rao bound, and energy efficiency. While the classical FP methods (including Dinkelbach's algorithm \cite{dinkelbach1967nonlinear}) are typically limited to the single-scalar-ratio case, the more recent development of the quadratic transform for FP \cite{shen2018fractional1} allows us to handle a broader range of problems with multiple ratios and with matrix variables, as motivated by optimization involving multiple-input multiple-output (MIMO) networks. 

The following theorem states a typical result on the quadratic transform for FP.
This result is useful for developing algorithms for optimizing ratios, because 
it decouples the numerator and denominator of each fractional term. 

\begin{theorem}[Quadratic Transform for FP \cite{FP_SPM}] 
\label{thm:FP}
Consider $k$ pairs of matrix functions $A_i:\mathcal X\rightarrow \mathbb C^{n\times m}$ and $B_i:\mathcal X\rightarrow \mathbb H^{n}_{++}$, for $i=1,\ldots,k$, where $\mathcal X$ is a nonempty constraint set, and $\mathbb H^{n}_{++}$ denotes the set of $n\times n$ positive definite matrices.  The sum-of-traces-of-matrix-ratio FP problem
\begin{align}
\label{FP:prob}
\underset{x \in \mathcal X}{\text{maximize}} &\quad \sum^k_{i=1}\operatorname{Tr} \Bigl(A^{\hh}_i(x) B^{-1}_i(x) A_i(x)\Bigr)
\end{align}
is equivalent to
\begin{equation}
\label{QT prob}
\begin{aligned}
\underset{x \in \mathcal X, Y_1,\ldots,Y_k}{\text{\ \ maximize}} &\quad \sum^k_{i=1}\operatorname{Tr}\Bigl(2\Re\bigl\{Y_i^{\hh}A_i(x)\bigr\}-Y_i^{\hh}B_i(x) Y_i\Bigr) \\
\text{subject to} 
&\quad\, Y_i\in\mathbb C^{n\times m},\quad \forall i=1,\ldots,k,
\end{aligned}
\end{equation}
where $\Re\{\cdot\}$ denotes the real part, in the sense that the two problems have the same solution for $x$ and their optimal objective values are equal.
\end{theorem}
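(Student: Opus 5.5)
The plan is to show that, for each fixed $x\in\mathcal X$, maximizing the objective of \eqref{QT prob} over the auxiliary variables $Y_1,\ldots,Y_k$ returns exactly the objective of \eqref{FP:prob}. The equivalence then follows by a standard partial-maximization argument. The $k$ terms are decoupled in the $Y_i$, so it suffices to treat a single term: fix $x$, write $A=A_i(x)$ and $B=B_i(x)\succ 0$, and study
\[
f(Y)=\operatorname{Tr}\bigl(2\Re\{Y^{\hh}A\}-Y^{\hh}BY\bigr),\qquad Y\in\mathbb C^{n\times m}.
\]

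The key step is completing the square. Since $B$ is positive definite, $B^{-1}$ exists. Expanding $\operatorname{Tr}\bigl((Y-B^{-1}A)^{\hh}B(Y-B^{-1}A)\bigr)$ gives
\[
f(Y)=\operatorname{Tr}\bigl(A^{\hh}B^{-1}A\bigr)-\operatorname{Tr}\bigl((Y-B^{-1}A)^{\hh}B(Y-B^{-1}A)\bigr).
\]
This uses $B^{\hh}=B$ and the identity $\operatorname{Tr}(Y^{\hh}A)+\operatorname{Tr}(A^{\hh}Y)=2\Re\operatorname{Tr}(Y^{\hh}A)$. The subtracted term is nonnegative because it equals $\|B^{1/2}(Y-B^{-1}A)\|_F^2$. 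Hence
\[
\max_Y f(Y)=\operatorname{Tr}\bigl(A^{\hh}B^{-1}A\bigr),
\]
attained uniquely at $Y^\star=B^{-1}A$. An alternative check is that $f$ is strictly concave, so setting its gradient $A-BY$ to zero gives the same point.

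To conclude, define $g(x,Y_1,\ldots,Y_k)$ as the objective of \eqref{QT prob}. Then $\sup_{Y}g(x,Y)=\sum_i\operatorname{Tr}(A_i^{\hh}(x)B_i^{-1}(x)A_i(x))$ for every $x$, and the supremum is attained at $Y_i^\star(x)=B_i^{-1}(x)A_i(x)$. Therefore $\sup_{x,Y}g=\sup_x\sup_Y g$ equals the optimal value of \eqref{FP:prob}. The two solution sets also match:
\begin{itemize}
\item If $x^\star$ solves \eqref{FP:prob}, then $(x^\star,Y^\star(x^\star))$ solves \eqref{QT prob}.
\item If $(x^\star,Y^\star)$ solves \eqref{QT prob}, then $x^\star$ solves \eqref{FP:prob}, because $\operatorname{obj}(x^\star)\ge g(x^\star,Y^\star)=\sup g\ge\operatorname{obj}(x)$ for all $x$.
\end{itemize}

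I expect no real obstacle. The only points needing care are the complex-valued bookkeeping in the completing-the-square identity (real parts and Hermitian symmetry of $B$) and the positive definiteness of $B$, which guarantees both that $B^{-1}$ exists and that the residual term is nonnegative.
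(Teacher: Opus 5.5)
Your proof is correct and is essentially the argument the paper cites for this theorem: complete the square to get $\operatorname{Tr}(A^{\hh}B^{-1}A)-\operatorname{Tr}\bigl((Y-B^{-1}A)^{\hh}B(Y-B^{-1}A)\bigr)$, read off $Y^\star=B^{-1}A$, and substitute back; the paper itself reuses the same computation, with $V_0=B^+A$, to prove Theorem~\ref{thm:FP:generalized inverse:Loewner}. The paper's own derivation in Section~\ref{sec:Schur_to_QT} is an alternative that obtains the transform as the Lagrangian dual of $\min\{\operatorname{Tr}(C): M\succeq 0\}$, with $Y$ equal to minus the off-diagonal block of the multiplier; your route is more elementary and gives the optimizer explicitly, while the paper's yields that dual-variable interpretation.
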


The proof of the above theorem is based on explicitly finding the optimal $Y_i$ by completing-the-square, then substituting the optimal $Y_i$ back into the objective function \cite{shen2018fractional1}. This paper shows that there is an alternative way of deriving the same transform based on the Schur complement in matrix analysis. In the rest of the paper, we drop the arguments in the matrix functions, e.g.,
$A_i(x)$ is written as $A_i$, whenever doing so causes no confusion. 

The notion of Schur complement in linear algebra has a long history \cite{Schur}. Consider a square matrix
\begin{equation} \label{eq:M_schur}
M=\begin{bmatrix}
    C & A^{\hh}\\
    A & B
\end{bmatrix}
\end{equation}
with the blocks $A\in\mathbb C^{n\times m}$, $B\in\mathbb H^{n}_{++}$ (with slight abuse of notation), and $C\in\mathbb H^{m}_+$, where $\mathbb H^m_+$ denotes the set of $m\times m$ positive semidefinite matrices. The square matrix $C-A^{\hh}B^{-1}A$ is referred to as the \emph{Schur complement} of the block $B$ of the matrix $M$. 

A classic result concerning the Schur complement is a linear matrix inequality (LMI) condition for the positive semidefiniteness of $C-A^{\hh}B^{-1}A$. 
A proof can be found in \cite{boyd2004convex}. 
\begin{theorem}[Schur Complement and LMI]
\label{thm:Schur}
When $B\succ0$, the following two conditions are equivalent:
\begin{equation}
\label{Schur complement:equivalence}
 C-A^{\hh}B^{-1}A \succeq 0 \quad\Leftrightarrow \quad M\succeq 0.
\end{equation}
\end{theorem}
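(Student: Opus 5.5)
The plan is to prove Theorem~\ref{thm:Schur} with a congruence transformation that block-diagonalizes $M$, using only $B\succ 0$. Define the invertible block matrix
\begin{equation*}
T=\begin{bmatrix} I_m & 0\\ -B^{-1}A & I_n\end{bmatrix}.
\end{equation*}
Since $B$ is Hermitian, $(B^{-1})^{\hh}=B^{-1}$. Expanding the block products then gives
\begin{equation*}
T^{\hh} M T=\begin{bmatrix} C-A^{\hh}B^{-1}A & 0\\ 0 & B\end{bmatrix}.
\end{equation*}
The steps are to compute $MT$ first and then premultiply by $T^{\hh}$, checking that the off-diagonal blocks cancel: $A^{\hh}-A^{\hh}B^{-1}B=0$, and similarly for the lower-left block.

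Next, use the fact that congruence by an invertible matrix preserves positive semidefiniteness. If $M\succeq 0$, then for any $z$ we have $z^{\hh}T^{\hh}MTz=(Tz)^{\hh}M(Tz)\ge 0$. Conversely, if the block-diagonal matrix is PSD, then for any $w$ we set $z=T^{-1}w$, where
\begin{equation*}
T^{-1}=\begin{bmatrix} I_m & 0\\ B^{-1}A & I_n\end{bmatrix},
\end{equation*}
and obtain $w^{\hh}Mw\ge 0$. So $M\succeq 0$ if and only if $\operatorname{diag}(C-A^{\hh}B^{-1}A,\,B)\succeq 0$.

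Finally, a block-diagonal Hermitian matrix is PSD if and only if each diagonal block is PSD: test it with vectors supported on one block, and note that the quadratic form splits into a sum over the blocks. Because $B\succ 0$ by hypothesis, the condition reduces to $C-A^{\hh}B^{-1}A\succeq 0$, which proves \eqref{Schur complement:equivalence}.

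An alternative route, closer in spirit to Theorem~\ref{thm:FP}, avoids $T$ altogether. Write $z=[u;v]$, so that $z^{\hh}Mz=u^{\hh}Cu+2\Re\{v^{\hh}Au\}+v^{\hh}Bv$. Minimizing over $v$ by completing the square gives the minimizer $v=-B^{-1}Au$ and the minimum value $u^{\hh}(C-A^{\hh}B^{-1}A)u$. Hence $M\succeq 0$ if and only if this minimum is nonnegative for every $u$. I will present the congruence argument and mention this variant.

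No step is a genuine obstacle. The only points needing care are the Hermitian bookkeeping in the block product, namely the identity $(B^{-1}A)^{\hh}=A^{\hh}B^{-1}$, and the invertibility of $T$, which is immediate because $T$ is block lower-triangular with identity diagonal blocks.
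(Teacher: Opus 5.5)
Your proof is correct. However, the argument you chose to present is not the one the paper gives; the paper's derivation is essentially your ``alternative route.''

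In Section~\ref{sec:QT_to_Schur} the paper argues as follows:
\begin{itemize}
\item It writes $q^{\hh}A^{\hh}B^{-1}Aq=\sup_{y}\{2\Re\{y^{\hh}Aq\}-y^{\hh}By\}$, using Theorem~\ref{thm:FP} with $m=k=1$.
\item It drops the supremum to get a condition that must hold for all pairs $(q,y)$.
\item It recognizes that condition as nonnegativity of the quadratic form of $M$ at $[q;-y]$.
\end{itemize}
Your minimizer $v=-B^{-1}Au$ is exactly $-y^\star$, where $y^\star=B^{-1}Aq$ is the optimal auxiliary variable of the quadratic transform.

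Your congruence $T^{\hh}MT=\mathrm{blkdiag}(C-A^{\hh}B^{-1}A,\,B)$ is the matrix form of that completing-the-square step. It does appear in the paper, but only in Section~\ref{sec:det}: the substitution $\xi=y-B^{-1}Aq$ there amounts to $[q;-y]=T[q;-\xi]$.

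The congruence gives you more at once. Since $M=T^{-\hh}\,\mathrm{blkdiag}(C-A^{\hh}B^{-1}A,\,B)\,T^{-1}$ with $|T|=1$, you get all of the following:
\begin{itemize}
\item the LMI equivalence;
\item its strict version, $M\succ0$ iff $C-A^{\hh}B^{-1}A\succ0$;
\item inertia additivity;
\item the determinant formula $|M|=|B|\cdot|C-A^{\hh}B^{-1}A|$ whenever $B$ is invertible, without assuming $M\succeq0$ and without the entropy or $C+\tau I$ perturbation argument.
\end{itemize}

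The paper's variational route gives the interpretation the paper is after. The LMI $M\succeq0$ is literally the quadratic transform with the maximization over the auxiliary variable removed. That is why Theorem~\ref{thm:FP} implies Theorem~\ref{thm:Schur}. It is also why the multi-ratio transform produces the arrow matrix $M^{(k)}$ of Theorem~\ref{thm:generalized LMI} directly.
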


Thus, the positive semidefiniteness of the Schur complement
is converted to the LMI $M\succeq0$, which is potentially easier to deal with for constrained optimization.
There is also a second important result on the Schur complement concerning the matrix determinant, which will be discussed in Section \ref{sec:det}. 

The Schur complement has seen fruitful applications in information theory, e.g., in Wyner-Ziv coding \cite{WZ_TIT}, 
the relation between the mutual information and the minimum mean-squared error (MMSE) \cite{Guo_TIT}, the dirty-paper coding \cite{WY_ISIT}, and the uplink-downlink duality \cite{Yu2006}. Furthermore, quantum information theory relies heavily on linear algebra; the Schur complement is a useful mathematical tool in this domain \cite{Wilde_book}. 

Recently, both FP and Schur complement have been used in the algorithm design for joint sensing and communications \cite{Ng_TWC, Wang_TWC, Yuan_IOTJ, weizhang_TWC, max_min_FP, attiah2026, 10097000}, indicating that there may be an intimate connection between the two, but such a connection 
has not been properly explored. The present paper aims to develop a unified theory between the two. The main results of the paper are:
\begin{itemize}
    \item We show that Theorem \ref{thm:FP} implies Theorem \ref{thm:Schur}. Further, by extending this connection to the multiple-ratio case, we obtain a generalization of the LMI condition in \eqref{Schur complement:equivalence}.
    \item We show that Theorem \ref{thm:Schur} implies Theorem \ref{thm:FP}. This allows us to devise a generalized quadratic transform that accounts for the generalized inverse.
    \item We show that the auxiliary variable in quadratic transform can be reinterpreted in terms of an MMSE estimation problem and use this connection to rederive the Schur-complement determinant formula.
    \item We apply the generalized FP framework to the least-favorable-noise minimax formulation of the Gaussian vector broadcast-channel (BC) sum capacity problem. When the least-favorable noise covariance is singular, the generalized quadratic transform avoids explicit matrix inverse and directly yields the reciprocal multiple-access-channel (MAC) capacity expression, thereby recovering the uplink-downlink duality for sum capacity.
\end{itemize}

\emph{Notations:}
In this paper, $\mathbb{C}^{n \times m}$ denotes the set of $n \times m$ complex matrices, while $\mathbb H^n$ denotes the set of $n\times n$ Hermitian matrices. Moreover,
$\mathbb{H}^n_{+}$ and $\mathbb{H}^n_{++}$ denote the sets of $n \times n$ Hermitian positive semidefinite and positive definite matrices. We use $(\cdot)^{\T}$ and $(\cdot)^{\hh}$ to represent the transpose and Hermitian transpose of a matrix, and write $M \succeq 0$ (or $M \succ 0$) to indicate that $M$ is positive semidefinite (or positive definite). Furthermore, we denote by $\operatorname{Tr}(\cdot)$ the matrix trace, $|\cdot|$ the determinant, $(\cdot)^+$ the generalized inverse, $I$ the identity matrix (with its dimension indicated by a subscript when necessary), $0$ the zero matrix, $\mathcal{R}(\cdot)$ the range space of a matrix, $\mathcal{N}(\cdot)$ the null space of a matrix, $\Re\{\cdot\}$ the real part of a complex number or complex matrix, $\mathbb{E}[\cdot]$ the expectation operator, $\|\cdot\|_2$ the Euclidean norm, $\mathrm{blkdiag}(\cdot)$ the block diagonal matrix, and $h(\cdot)$ the differential entropy. For a block matrix $X$, $\mathcal P_{\mathrm{blk}}(X)$ denotes the projection of $X$ onto its prescribed block-diagonal blocks. 
Moreover, we use $\mathcal{CN}(m,\Sigma)$ to denote a circularly symmetric complex Gaussian random vector with mean $m$ and covariance $\Sigma$.
All logarithms in this paper are base $e$.

\emph{Organization of the Paper:} In Sections \ref{sec:QT_to_Schur} and \ref{sec:Schur_to_QT},
we show how the quadratic transform for FP can be derived from Schur-complement LMI
relation and vice versa. This gives an interpretation of the auxiliary variable in FP
as a Lagrangian dual variable in the Schur complement reformulation of the FP problem. 
In Section \ref{sec:det}, we give a second interpretation of the auxiliary variable, which gives
rise to a connection between FP and the Schur determinant formula. 
The connection between FP and Schur complement allows a generalization of FP to the case
where the denominator matrix is singular. The generalized FP is formally developed in
Section \ref{sec:generalized FP}.  In Section \ref{sec:duality}, we utilize this 
generalized FP to simplify the derivation of the minimax formulation of the sum
capacity of the Gaussian vector broadcast channel. 
Finally, conclusions are drawn in Section \ref{sec:conclude}.

\section{From Quadratic Transform to\\ Schur Complement}
\label{sec:QT_to_Schur}

We start by rederiving the equivalence relation in \eqref{Schur complement:equivalence} from Theorem \ref{thm:FP}. 
Let $m=1$ in Theorem \ref{thm:FP}, so each $A_i:\mathcal X\rightarrow\mathbb C^{n\times m}$ is just a vector function $a_i:\mathcal X\rightarrow\mathbb C^n$. Theorem \ref{thm:FP} states that the sum-of-ratios FP problem
\begin{align}
\underset{x \in \mathcal X}{\text{maximize}} &\quad \sum^k_{i=1}a^{\hh}_iB^{-1}_ia_i
\end{align}
is equivalent to
\begin{align}
\label{vector:FP}
\underset{x \in \mathcal X,\ y_1,\ldots,y_k \in \mathbb C^n}{\text{maximize}} &\quad \sum^k_{i=1}\Bigl(2\Re\{y_i^{\hh}a_i\}-y_i^{\hh}B_i y_i\Bigr).
\end{align}
We can now show the following chain of equivalences: 
\begin{align}
&C-A^{\hh}B^{-1}A \succeq 0\notag\\
&\ \overset{(a)}{\Leftrightarrow} q^{\hh}(C-A^{\hh}B^{-1}A)q\ge0,\; \forall q\in\mathbb C^m\notag\\
&\ \overset{(b)}{\Leftrightarrow} q^{\hh}Cq - \sup_{y\in\mathbb C^n}\{2\Re\{y^{\hh}Aq\}-y^{\hh}By\}\ge0,\;\forall q\notag\\
&\ \Leftrightarrow q^{\hh}Cq - y^{\hh}Aq-(Aq)^{\hh}y+y^{\hh}By\ge0,\;\forall q, y\notag\\
&\ \Leftrightarrow 
\begin{bmatrix}
    q^{\hh} & -y^{\hh}
\end{bmatrix}
\begin{bmatrix}
    C & A^{\hh}\\
    A & B
\end{bmatrix}
\begin{bmatrix}
    q\\
    -y
\end{bmatrix}\ge0,\;\forall q,y\notag\\
&\ \Leftrightarrow 
u^{\hh}
\begin{bmatrix}
    C & A^{\hh}\\
    A & B
\end{bmatrix}
u\ge0,\;\forall u=\begin{bmatrix}
    q\\ -y
\end{bmatrix}\in\mathbb C^{m+n} \notag \\ 
&\ \overset{(c)}{\Leftrightarrow} \ M\succeq0,
\label{chain:FP to Schur}
\end{align}
where $(a)$ follows from the definition of positive semidefiniteness, $(b)$ follows from Theorem \ref{thm:FP} with $m=1$ as mentioned earlier and $k=1$, and $(c)$ follows from the definition of positive semidefiniteness.
Thus, Theorem \ref{thm:Schur} follows from Theorem \ref{thm:FP}. 

It is interesting to note that only the single-ratio FP (with $k=1$) is needed to prove \eqref{Schur complement:equivalence}. It turns out that if we utilize the multiple-ratio result of Theorem \ref{thm:FP}, the above connection between FP and the Schur complement would imply the following more general chain of equivalences: 
\begin{align*}
&C-\sum^k_{i=1}A_i^{\hh}B_i^{-1}A_i \succeq 0\\
&\Leftrightarrow q^{\hh}\Bigg(C-\sum^k_{i=1}A_i^{\hh}B_i^{-1}A_i\Bigg)q\ge0,\;\forall q\in\mathbb C^m\\
&\Leftrightarrow q^{\hh}Cq - \sum^k_{i=1}\sup_{y_i}\{y_i^{\hh}A_iq+(A_iq)^{\hh}y_i-y_i^{\hh}B_iy_i\}
\ge0,\;\forall q\\
&\Leftrightarrow q^{\hh}Cq - \sum^k_{i=1}(y_i^{\hh}A_iq+(A_iq)^{\hh}y_i-y_i^{\hh}B_iy_i)\ge0,\;\forall q, y_i\\
&\Leftrightarrow 
\begin{bmatrix}
    q\\
    -y_1\\
    -y_2\\
    \vdots\\
    -y_k
\end{bmatrix}^{\hh}
\begin{bmatrix}
    C & A_1^{\hh} & A_2^{\hh} & \cdots & A_k^{\hh}\\
    A_1 & B_1 & 0 & \cdots & 0 \\
    A_2 & 0 & B_2 &   & \\
    \vdots & \vdots & &\ddots\\
    A_k & 0 & & & B_k
\end{bmatrix}
\begin{bmatrix}
    q\\
    -y_1\\
    -y_2\\
    \vdots\\
    -y_k
\end{bmatrix}\ge0, \forall q, y_i
\end{align*}
where all the blank blocks are filled with zero matrices. 

The resulting generalization of the Schur-complement LMI is stated in the following theorem.


\begin{theorem}[Generalized Schur Complement LMI]
    \label{thm:generalized LMI}
Consider the matrix
\begin{equation}
    M^{(k)} = 
\begin{bmatrix}
    C & A_1^{\hh} & A_2^{\hh} & \cdots & A_k^{\hh}\\
    A_1 & B_1 & 0 & \cdots & 0 \\
    A_2 & 0  & B_2 &   &  \\
    \vdots & \vdots & &\ddots & \\
    A_k & 0 &  & & B_k
\end{bmatrix}
\end{equation}
with the blocks $A_i\in\mathbb C^{n\times m}$, $B_i\in\mathbb H^{n}_{++}$, and $C\in\mathbb H^{m}$. The following equivalence relation holds:
\begin{equation}
C-\sum^k_{i=1}A_i^{\hh}B_i^{-1}A_i \succeq 0  \quad\Leftrightarrow \quad M^{(k)}\succeq 0.
\end{equation}
Thus, by utilizing the multiple-ratio quadratic transform result in FP, a more general version of the Schur-complement LMI relation can be established. 
\end{theorem}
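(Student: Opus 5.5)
The plan is to follow the chain of equivalences displayed just before the statement and justify each link. The key tool is Theorem \ref{thm:FP} with $m=1$, applied one ratio at a time to the vector functions $a_i=A_iq$. For fixed $q\in\mathbb C^m$ and $B_i\succ0$, completing the square gives
\begin{equation*}
\sup_{y_i\in\mathbb C^n}\Bigl\{2\Re\{y_i^{\hh}A_iq\}-y_i^{\hh}B_iy_i\Bigr\}=q^{\hh}A_i^{\hh}B_i^{-1}A_iq,
\end{equation*}
with the supremum attained at $y_i=B_i^{-1}A_iq$.

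First I rewrite $C-\sum_i A_i^{\hh}B_i^{-1}A_i\succeq0$ as the scalar condition $q^{\hh}Cq-\sum_i q^{\hh}A_i^{\hh}B_i^{-1}A_iq\ge0$ for all $q$. I then substitute the supremum representation for each ratio term. Since the $y_i$ appear in separate summands, the sum of the suprema equals the supremum over the tuple $(y_1,\ldots,y_k)$. The condition therefore becomes
\begin{equation*}
q^{\hh}Cq-\sum_{i=1}^k\Bigl(2\Re\{y_i^{\hh}A_iq\}-y_i^{\hh}B_iy_i\Bigr)\ge0
\end{equation*}
for all $q$ and all $y_1,\ldots,y_k$.

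Next I expand the quadratic form of $M^{(k)}$ at $u=[q^{\hh},-y_1^{\hh},\ldots,-y_k^{\hh}]^{\hh}$. Because the $B$-blocks are block diagonal, this expansion is exactly the left-hand side above. As $(q,y_1,\ldots,y_k)$ ranges over $\mathbb C^{m}\times(\mathbb C^n)^k$, the vector $u$ ranges over all of $\mathbb C^{m+kn}$, so the condition is equivalent to $M^{(k)}\succeq0$.

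The main point needing care is the step "supremum $\le$ bound $\Leftrightarrow$ bound holds for every $y$." This step is where the inequality turns into a condition over all $y$. It is valid because the supremum is finite when $B_i\succ0$, and "$c-\sup_y f(y)\ge0$" is equivalent to "$c-f(y)\ge0$ for all $y$." I will state this explicitly. I will also note that, unlike the setting of Theorem \ref{thm:Schur}, $C$ need only be Hermitian, so the Hermitian structure of $M^{(k)}$ makes the quadratic forms real. No other obstacle is expected.
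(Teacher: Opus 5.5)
Your proposal is correct and is essentially the paper's own argument. The paper proves the theorem by the same chain of equivalences: scalarize with $q$, replace each $q^{\hh}A_i^{\hh}B_i^{-1}A_iq$ by its quadratic-transform supremum over $y_i$, drop the supremum to get a condition for all $(q,y_1,\ldots,y_k)$, and recognize the result as the quadratic form of $M^{(k)}$ at $u=[q^{\hh},-y_1^{\hh},\ldots,-y_k^{\hh}]^{\hh}$; your completing-the-square identity, separability of the suprema, and the observation that $u$ sweeps all of $\mathbb C^{m+kn}$ just spell out those steps.
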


\begin{remark}
Theorem \ref{thm:generalized LMI} can also be viewed as a direct application of the
standard Schur-complement LMI to a block-diagonal matrix. To
see this, define
\begin{equation}
\hat B=\mathrm{blkdiag}(B_1,\ldots,B_k)\quad\text{and}\quad
\hat A=
\begin{bmatrix}
A_1\\
\vdots\\
A_k
\end{bmatrix}.
\end{equation}
Then the matrix $M^{(k)}$ in $(8)$ can be written compactly as
\begin{equation}
M^{(k)}
=
\begin{bmatrix}
C & \hat A^\hh\\
\hat A & \hat B
\end{bmatrix}.
\end{equation}
Since $\hat B\succ 0$, the standard Schur-complement condition
gives
\begin{equation}
M^{(k)}\succeq 0
\quad\Leftrightarrow\quad
C-\hat A^{\hh}\hat B^{-1}\hat A\succeq 0.
\end{equation}
Moreover,
\begin{equation}
\hat A^{\hh}\hat B^{-1}\hat A
=
\sum_{i=1}^k A_i^{\hh} B_i^{-1} A_i.
\end{equation}
Thus, Theorem \ref{thm:generalized LMI} boils down to a block-diagonal form of the classical Schur-complement LMI. Its relevance here is
that this block-diagonal form arises naturally from the multiple-ratio
quadratic transform for FP.
\end{remark}

\section{From Schur Complement to\\ Quadratic Transform}
\label{sec:Schur_to_QT}

We now discuss the reverse direction and show that Theorem \ref{thm:FP} follows from Theorem \ref{thm:Schur}. In the process, we provide an interpretation of the auxiliary variable $Y_i$ in the quadratic transform for FP.

To ease notation, we start with the single-ratio case of \eqref{FP:prob}:
\begin{align}
\label{single FP:prob}
\underset{x \in \mathcal X}{\text{maximize}} &\quad \operatorname{Tr} (A^{\hh}B^{-1}A). 
\end{align}
The above problem can be readily rewritten as
\begin{subequations}
\label{single FP:prob new}
\begin{align}
\max_{x \in \mathcal X} \min_{C\in\mathbb H^m} &\quad \operatorname{Tr} (C)\\
  \text{subject to}\,\,\, 
  &\quad\;\,C-A^{\hh}B^{-1}A\succeq 0.
\label{single FP:prob new:constraint c}
\end{align}
\end{subequations}
The above problem reformulation rewrites the objective in terms of a Schur complement. By Theorem \ref{thm:Schur}, we convert the positive semidefiniteness constraint \eqref{single FP:prob new:constraint c} to LMI as
\begin{subequations}
\label{single FP:prob LMI}
\begin{align}
\max_{x \in \mathcal X} \min_{C\in\mathbb H^m} &\quad \operatorname{Tr} (C)\\
  \text{subject to}\,\,\, 
  &\quad\; 
  M\succeq 0,
\end{align}
\end{subequations}
where $M$ is defined as in \eqref{eq:M_schur}.
Note that problem \eqref{single FP:prob LMI} consists of two parts: the inner minimization over $C$ for fixed $x$ and the outer maximization over $x$. We proceed to find the Lagrangian dual of the inner problem.

The Lagrangian function of the inner minimization is
\begin{align}
    L(C,\Lambda) = \operatorname{Tr}(C) - \operatorname{Tr}(\Lambda M),
\end{align}
where $\Lambda\in\mathbb H^{m+n}_+$ is the Lagrange multiplier. This $\Lambda$ can be partitioned into four blocks:
\begin{equation}
\label{Lambda}
\Lambda  
=
\begin{bmatrix}
    W & V^{\hh}\\
    V & Z
\end{bmatrix}\succeq 0,
\end{equation}
where $W\in\mathbb H^m_+$, $Z\in\mathbb H^n_+$, and $V\in\mathbb C^{n\times m}$. Substituting \eqref{Lambda} into $L(C,\Lambda)$, we have
\begin{align}
&L(C,\Lambda) = \operatorname{Tr}(C) - \operatorname{Tr}(WC+V^{\hh}A+A^{\hh}V)-\operatorname{Tr}(ZB)\notag\\
    &= \operatorname{Tr}\bigl((I_m-W)C-V^{\hh}A-A^{\hh}V\bigr)-\operatorname{Tr}(ZB).
\end{align}
The dual function is now given by
\begin{align}
    g(\Lambda) &= \inf_{C\in\mathbb H^m} L(C,\Lambda)\\
    &=
    \begin{cases}
      -\operatorname{Tr}\bigl(2\Re\{V^{\hh}A\}\big)-\operatorname{Tr}(ZB) & \text{if}\; W=I_m\vspace{0.3em}\\
      -\infty & \text{otherwise}.
    \end{cases} 
\end{align}
To ensure that the value of $g(\Lambda)$ is finite, we must have 
\begin{equation}
    W = I_m.
\end{equation}
After plugging $W=I_m$ into $\Lambda$ in \eqref{Lambda} and by using the following chain of equivalence relations:
\begin{align}
&\begin{bmatrix}
    I_m & V^{\hh}\\
    V & Z
\end{bmatrix}\succeq 0 \notag\\
&\overset{(a)}{\Leftrightarrow}
\begin{bmatrix}
 0   & I_n\\
 I_m & 0
\end{bmatrix}
\begin{bmatrix}
    I_m & V^{\hh}\\
    V & Z
\end{bmatrix}
\begin{bmatrix}
 0   & I_m\\
 I_n & 0
\end{bmatrix}
\succeq 0\notag\\
&\Leftrightarrow
\begin{bmatrix}
    Z & V\\
    V^{\hh} & I_m
\end{bmatrix}\succeq 0\notag\\
&\overset{(b)}{\Leftrightarrow}
Z-VV^{\hh}\succeq0,
\end{align}
where $(a)$ is due to a unitary congruence relation using a permutation matrix and $(b)$ follows from Theorem \ref{thm:Schur},  we arrive at the dual problem of the inner minimization in \eqref{single FP:prob LMI}: 
\begin{subequations}
\begin{align}
\underset{V,\, Z}{\text{maximize}} &\quad -\operatorname{Tr}\bigl(V^{\hh}A+A^{\hh}V\bigr)-\operatorname{Tr}(ZB)\\
  \text{subject to} &\quad\;\,Z-VV^{\hh}\succeq0.
\end{align}
\end{subequations}
Because $B\succ0$, the optimal solution of $Z$ is
\begin{equation}
    Z^\star = VV^{\hh}.
\end{equation}
As a consequence, the dual problem reduces to
\begin{align}
\underset{V\in\mathbb C^{n\times m}}{\text{maximize}} &\quad -\operatorname{Tr}\bigl(V^{\hh}A+A^{\hh}V+V^{\hh}BV\bigr).
\end{align}
This dual problem is formulated with respect to the inner minimization problem in \eqref{single FP:prob LMI}. Because the inner minimization problem is convex with strong duality, it is equivalent to its dual. Replacing the inner minimization by the dual problem, we convert \eqref{single FP:prob LMI} to
\begin{align}
\underset{x \in \mathcal X,\, V\in\mathbb C^{n\times m}}{\text{maximize}} &\quad -\operatorname{Tr}\bigl(V^{\hh}A+A^{\hh}V+V^{\hh}BV\bigr).
\end{align}
Further, substituting $V=-Y$ into the above problem yields
\begin{align}
\label{singl ratio:QT prob}
\underset{x \in \mathcal X,\, Y\in\mathbb C^{n\times m}}{\text{maximize}} &\quad \operatorname{Tr}\bigl(Y^{\hh}A+A^{\hh}Y-Y^{\hh}BY\bigr),
\end{align}
which is exactly the single-ratio case of the quadratic transform in Theorem \ref{thm:FP}. 

The above result can be immediately extended to multiple ratios. The main idea is to rewrite \eqref{FP:prob} as
\begin{subequations}
\label{}
\begin{align}
\max_{x \in \mathcal X} \min_{C_i\in\mathbb H^m} &\quad \sum^k_{i=1}\operatorname{Tr} (C_i)\\
  \text{subject to}\,\,\, 
  &\quad\;C_i-A^{\hh}_iB^{-1}_iA_i\succeq 0,\quad\forall i=1,\ldots,k.
\label{}
\end{align}
\end{subequations}
After each constraint $C_i-A^{\hh}_iB^{-1}_iA_i\succeq 0$ is converted to an LMI, a Lagrange multiplier $\Lambda_i$ is introduced for each LMI, and can be further determined as
\begin{equation}
    \Lambda_i=
    \begin{bmatrix}
    I & V^{\hh}_i\\
    V_i & V_iV^{\hh}_i
\end{bmatrix}.
\end{equation}
By substituting the resulting dual problem into the inner minimization, we arrive exactly at the multiple-ratio quadratic transform in \eqref{QT prob}. 
So Theorem \ref{thm:Schur} implies Theorem \ref{thm:FP}.

\begin{remark}
It can be observed that the auxiliary variable $Y$ of the quadratic transform in Theorem \ref{thm:FP} is precisely the negative of the off-diagonal block $V$ of the Lagrange multiplier $\Lambda$. 
\end{remark}

\begin{remark}
The result that Theorem \ref{thm:FP} implies Theorem \ref{thm:Schur} and vice versa pertains to \emph{maximization FP}. A slightly weaker connection can also be established between 
Schur complement and \emph{minimization FP}. 
Consider the following FP problem of minimizing the trace of a matrix ratio, as opposed to the earlier maximization problem \eqref{single FP:prob}:
\begin{align}
\label{single FP:prob min}
\underset{x \in \mathcal X}{\text{minimize}} &\quad \operatorname{Tr} (A^{\hh}B^{-1}A). 
\end{align}
Again, by introducing an auxiliary variable $C\in\mathbb H^m_+$, we formulate an optimization problem with a constraint in the form of a Schur complement:
\begin{subequations}
\label{single FP:prob min new}
\begin{align}
\underset{x\in \mathcal X,\, C\in\mathbb H^{m}_{+}}{\text{minimize}} &\quad \operatorname{Tr} (C)\\
  \text{subject to}\,\, 
  &\quad\;C-A^{\hh}B^{-1}A\succeq 0.
\label{}
\end{align}
\end{subequations}
However, there is a crucial difference between the minimization FP and the maximization FP. 
For the minimization FP, after taking the Lagrangian dual of \eqref{single FP:prob min new} and repeating the same procedure as before, we obtain 

\begin{align}
\label{min FP:max-min}
\max_{Y\in\mathbb C^{n\times m}}\;\min_{x \in \mathcal X}&\quad \operatorname{Tr}\bigl(2\Re\{Y^{\hh}A\}-Y^{\hh}BY\bigr).
\end{align}
In contrast, if we directly decouple the matrix ratio $A^{\hh}B^{-1}A$ using the quadratic transform of Theorem \ref{thm:FP}, the minimization FP problem \eqref{single FP:prob min} is recast into
\begin{align}
\min_{x \in \mathcal X}\;\max_{Y\in\mathbb C^{n\times m}}\;\operatorname{Tr}\bigl(2\Re\{Y^{\hh}A\}-Y^{\hh}BY\bigr).
\label{min FP:min-max}
\end{align}
The two can be the same, but only if the minimization and the maximization can be interchanged. 
Thus, there is also a connection between the minimization FP and Schur complement just as in the maximization FP case, but it requires additional conditions that ensure min-max is equal to max-min, e.g., Sion's minimax condition \cite{Sion_minimax}.
\end{remark}

The above weaker form of connection between the Lagrangian dual of the Schur complement and FP is foreshadowed in \cite{Attiah_ISIT24} for the scalar-ratio FP case of \eqref{single FP:prob min} with $m=1$ in the context of an integrated sensing and communication problem. The proof technique of \cite{Attiah_ISIT24} uses the Schur complement, while earlier work \cite{10097000} for a similar problem setting uses the quadratic transform for FP. This indicates a potential connection between the two. 

Finally, we remark that the quadratic transform can be extended to more complicated optimization problems, but 
they do not necessarily have a connection with Schur complement. Examples include the mixed-max-and-min FP and the log-ratio FP as treated in \cite{FP_SPM}.

\section{Connection to Schur Complement Determinant Formula}
\label{sec:det}

This paper aims to provide insight on the interpretations of the auxiliary variable $Y_i$ in the quadratic transform for FP in Theorem \ref{thm:FP}. 
One such interpretation is that, as seen in Section \ref{sec:Schur_to_QT}, the auxiliary variable $Y_i$ corresponds to the negative of the off-diagonal block of the Lagrangian dual variable in a Schur complement reformulation of the original FP problem. 

There is also an alternative interpretation of the auxiliary variable. Recall that in Section \ref{sec:QT_to_Schur} the role of $Y_i$ can be seen in the chain of equivalences \eqref{chain:FP to Schur}, where for the single-ratio case, $y$ is the vector that optimizes a quadratic form as shown in line (b) of \eqref{chain:FP to Schur}. 
As we show in this section, if this optimization of the quadratic form is interpreted as an MMSE estimation, it can give a way of rederiving the Schur-complement determinant formula.
This gives another connection between FP and Schur complement.


Consider an MMSE estimation problem on a jointly circularly symmetric complex Gaussian random vector, 
denoted using $\mathcal{CN}(\cdot,\cdot)$, called $u$:
\begin{equation}
    u=
    \begin{bmatrix}
        q\\
        -y
    \end{bmatrix}
    \sim\mathcal{CN}(0,J),\;\text{where}\; 
J= 
\begin{bmatrix}
    F & G^{\hh}\\
    G & H
\end{bmatrix}\succ 0
\end{equation}
with $F\in\mathbb H^m_{++}$, $G\in\mathbb C^{n\times m}$, and $H\in\mathbb H^n_{++}$. 
Treat $q$ as the observation. Consider the problem of estimating $y$ based on $q$. 
The MMSE estimator is given by
\begin{align}
\hat y &= \mathbb E[y|q]\notag\\
    &= \Sigma_{yq} \Sigma^{-1}_{qq} q = -\,GF^{-1}q,
\end{align}
where $\Sigma_{yq}$ and $\Sigma_{qq}$ are the respective covariance matrices. 

In the meanwhile, in line (b) of \eqref{chain:FP to Schur}, the optimal $y$ is
\begin{equation}
    y^\star = B^{-1}Aq.
\end{equation}
It turns out that if we identify $J=M^{-1}$, then $y^\star$ and $\hat y$ coincide, i.e., the optimal auxiliary variable $y$ of the quadratic transform amounts to an MMSE estimator. This is because the $(2,1)$-th block of $MJ$ is $AF+BG=0$ when $J=M^{-1}$, thus we have $B^{-1}A=-\,GF^{-1}$. 

The above interpretation of $y$ in terms of an MMSE estimation problem gives a way of rederiving the Schur-complement determinant formula. 
Assuming $J=M^{-1}$, the entropy of $u$ can be computed as
\begin{equation}
\label{entroy:1}
    h(u) = \log\Bigl((\pi e)^{m+n}|M^{-1}|\Bigr).
\end{equation}
Next, let us compute $h(u)$ in another way. Denote by $\xi$ the MMSE error in the above estimation problem:
\begin{equation}
    \xi = y - y^\star=y-B^{-1}Aq.
\end{equation}
Because $M=\begin{bmatrix}
    C & A^{\hh}\\
    A & B
\end{bmatrix}$ and $u=\begin{bmatrix}
    q\\ -y
\end{bmatrix}$, it holds that
\begin{align}
u^{\hh}Mu
&= q^{\hh}Cq - y^{\hh}Aq-(Aq)^{\hh}y+y^{\hh}By\notag\\
&=
\begin{bmatrix}
    q^{\hh}  & -\xi^{\hh}
\end{bmatrix}
\begin{bmatrix}
    C-A^{\hh}B^{-1}A & 0 \\
    0 & B
\end{bmatrix}
\begin{bmatrix}
    q  \\ -\xi
\end{bmatrix}.
\end{align}
Since $u\sim\mathcal{CN}(0,M^{-1})$, we have
\begin{equation}
    \begin{bmatrix}
        q\\
        -\xi
    \end{bmatrix}
    \sim\mathcal{CN}\left(0,
\begin{bmatrix}
    C-A^{\hh}B^{-1}A & 0 \\
    0 & B
\end{bmatrix}^{-1}\right),
\end{equation}
from which we see that 
\begin{equation}
\label{distribution: q and tilde y}
    q \sim\mathcal{CN}(0,(C-A^{\hh}B^{-1}A)^{-1}) 
    \;\;\text{and}\;\;
    \xi \sim\mathcal{CN}(0,B^{-1}),
\end{equation}
and that $q$ is independent of $\xi$, which is expected because the MMSE error $\xi$ must be orthogonal to the observation $q$ by the MMSE estimation theory.
Then, 
\begin{align}
\label{entropy:2}
h(u)& \overset{(a)}{=} h(q) + h(\xi)\notag\\
&\overset{(b)}{=} \log\Bigl((\pi e)^m\bigl|(C-A^{\hh}B^{-1}A)^{-1}\bigr|\Bigr)\notag\\
&\qquad+ \log\Bigl((\pi e)^n\bigl|B^{-1}\bigr|\Bigr),
\end{align}
where 
$(a)$ follows since $q$ and $\xi$ are independent, and $(b)$ follows from \eqref{distribution: q and tilde y}. 
Combining \eqref{entroy:1} and \eqref{entropy:2} establishes the Schur-complement determinant formula, assuming $M\succ 0$. 

\begin{theorem}[Schur Complement Determinant Formula]
\label{thm:Schur formula}
When $M\succeq0$ and $B\succ0$, we have
\begin{equation}
\label{thm:Schur:eq}
|M| = |B|\cdot\left|C-A^{\hh}B^{-1}A\right|.
\end{equation}
\end{theorem}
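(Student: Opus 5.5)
The derivation just before the statement already gives the formula when $M\succ0$. So the plan is to make that argument rigorous in the positive definite case and then extend it to the boundary case where $M\succeq0$ is singular.

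\emph{Positive definite case.} Assume first that $M\succ 0$. Then $J=M^{-1}\succ0$, so $u\sim\mathcal{CN}(0,M^{-1})$ is a nondegenerate Gaussian and its differential entropy is finite, as in \eqref{entroy:1}. Since $B\succ0$ and $M\succ0$, Theorem \ref{thm:Schur} together with a short strictness argument gives $C-A^{\hh}B^{-1}A\succ0$: if $q\neq0$ had $q^{\hh}(C-A^{\hh}B^{-1}A)q=0$, the vector $u=[q;-B^{-1}Aq]$ would give $u^{\hh}Mu=0$, which is a contradiction.

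The change of variables $(q,-y)\mapsto(q,-\xi)$ is linear with a block unit-triangular matrix, so it has unit Jacobian determinant. The quadratic-form identity preceding the theorem therefore shows that the block-diagonal matrix $\mathrm{blkdiag}(C-A^{\hh}B^{-1}A,\,B)$ is congruent to $M$ via that unimodular transformation. Consequently $[q;-\xi]$ has the stated block-diagonal inverse covariance, $q$ and $\xi$ are independent, and the entropies add as in \eqref{entropy:2}.

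Equating \eqref{entroy:1} with \eqref{entropy:2} gives
\[
\log|M^{-1}|=\log\bigl|(C-A^{\hh}B^{-1}A)^{-1}\bigr|+\log|B^{-1}|,
\]
which after exponentiating and inverting is \eqref{thm:Schur:eq}. I would also remark that the congruence alone already gives the formula directly, because the transformation matrix has determinant one. The entropy computation is the paper's intended interpretation of that fact.

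\emph{Singular case.} The main obstacle is the case $M\succeq0$ with $M$ singular, where the Gaussian $u$ does not exist and the entropy argument cannot be used. I would handle it by perturbation. Replace $C$ by $C+\epsilon I_m$ with $\epsilon>0$. Then the perturbed Schur complement is $C-A^{\hh}B^{-1}A+\epsilon I\succ0$, since the unperturbed complement is $\succeq0$ by Theorem \ref{thm:Schur}. Applying the strict version of the same equivalence shows the perturbed matrix $M_\epsilon\succ0$. The positive definite case then gives
\[
|M_\epsilon|=|B|\cdot\bigl|C+\epsilon I-A^{\hh}B^{-1}A\bigr|.
\]
Both sides are polynomials in the entries and hence continuous in $\epsilon$, so letting $\epsilon\downarrow0$ yields \eqref{thm:Schur:eq}. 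Both sides are then zero, because a singular $M$ forces a singular Schur complement via the same congruence.

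The one point needing care is that strict positive definiteness transfers through the congruence, i.e.\ $M\succ0\Leftrightarrow C-A^{\hh}B^{-1}A\succ0$ when $B\succ0$. I would verify this directly from the factorization $u^{\hh}Mu=q^{\hh}(C-A^{\hh}B^{-1}A)q+\xi^{\hh}B\xi$, using that the map $u\mapsto(q,\xi)$ is a bijection.
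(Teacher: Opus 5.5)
Your proposal is correct and takes the same route as the paper: the entropy/MMSE argument when $M\succ0$, then replacing $C$ by $C+\epsilon I$ to reach the singular case. The difference is minor. The paper takes $\tau$ large, so that $M(\tau)\succ0$ without using $M\succeq0$, and then sets $\tau=0$ by a polynomial identity. You instead use $M\succeq0$ and Theorem~\ref{thm:Schur} to keep $M_\epsilon\succ0$ for all $\epsilon>0$ and pass to the limit by continuity. Your side remark is also right: the unit-triangular congruence alone already gives the formula.
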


\begin{IEEEproof}
The entropy-based argument leading to the theorem establishes the formula for the case of $M\succ0$. For the general case with $M\succeq0$, we apply the result to
\begin{equation}
M(\tau)=
\begin{bmatrix}
C+\tau I & A^\hh\\
A & B
\end{bmatrix},
\end{equation}
where $\tau>0$ is sufficiently large so that $M(\tau)\succ0$. Then
\begin{equation}
|M(\tau)|
=
|B|\cdot\left|C+\tau I-A^\hh B^{-1}A\right|
\end{equation}
holds for all sufficiently large $\tau$. Since both sides are polynomials in $\tau$ and they agree for $\tau$ larger than a threshold, the identity must hold for all $\tau$, including $\tau=0$.
\end{IEEEproof}


\section{FP with Generalized Inverse}
\label{sec:generalized FP}

This section extends the connection between FP and Schur complement to the case where the denominator matrix is singular. The development proceeds from the generalized Schur complement to a discussion of the range condition and finally to the Loewner monotone generalization.


\subsection{Generalized Schur Complement}
\label{subsec:generalized FP}

The connection between the Schur complement and the quadratic transform allows
advances in one area to be mapped to the other area.  Specifically, the Schur
complement can be extended to the case of the so-called generalized inverse
\cite{Fuzhen_book}. This means that the quadratic transform technique for
solving FP can also be extended for the generalized inverse. 
In this section, we develop such an extension of Theorem \ref{thm:FP}. 

For any matrix $D\in\mathbb C^{m\times n}$, its generalized inverse, denoted by $D^+$, is any $n\times m$ matrix such that 
\begin{equation}
DD^+D=D. 
\end{equation}
The Moore-Penrose inverse is a special case of the generalized inverse. The following is a result on the Schur complement for the generalized inverse \cite{Fuzhen_book}.

\begin{theorem}[Schur Complement Based LMI with Generalized Inverse \cite{Fuzhen_book}]
\label{thm:Schur:generalized inverse}
When $B\succeq0$ and $(I_n-BB^+)A=0$, we have 
\begin{equation}
\label{Schur complement:equivalence:generalized}
C-A^{\hh}B^+A \succeq 0 \quad\Leftrightarrow \quad  M\succeq 0.
\end{equation}
\end{theorem}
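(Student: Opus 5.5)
The proof reuses the quadratic-form chain in \eqref{chain:FP to Schur}, replacing the completing-the-square step (b) with a version that tolerates a singular $B$. The key fact is a generalized quadratic transform for a single vector ratio. For $B\succeq0$ and $a\in\mathcal R(B)$ we want
\begin{equation*}
a^{\hh}B^+a=\sup_{y\in\mathbb C^n}\bigl\{2\Re\{y^{\hh}a\}-y^{\hh}By\bigr\},
\end{equation*}
with the supremum attained at $y^\star=B^+a$. Once this holds, the chain in Section~\ref{sec:QT_to_Schur} goes through verbatim with $B^{-1}$ replaced by $B^+$.

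\textbf{Step 1: range condition.} The hypothesis $(I_n-BB^+)A=0$ says exactly that $A=BB^+A$, so every column of $A$, and hence every vector $a=Aq$, lies in $\mathcal R(B)$. Write $a=Bw$ with $w=B^+a$.

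\textbf{Step 2: completing the square.} For any $y$ we have
\begin{equation*}
2\Re\{y^{\hh}Bw\}-y^{\hh}By=w^{\hh}Bw-(y-w)^{\hh}B(y-w)\le w^{\hh}Bw,
\end{equation*}
because $B\succeq0$. Equality holds at $y=w$, so the supremum equals $w^{\hh}Bw$.

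\textbf{Step 3: identifying the value.} It remains to show $w^{\hh}Bw=a^{\hh}B^+a$. Since $w^{\hh}Bw=w^{\hh}a$, this needs $w^{\hh}a=(B^+a)^{\hh}a=a^{\hh}(B^+)^{\hh}a$ to equal $a^{\hh}B^+a$. This is the step I expect to be the main obstacle, because a general generalized inverse $B^+$ need not be Hermitian. I would avoid the issue by evaluating the value at a preimage. Take any $z$ with $a=Bz$. Then
\begin{equation*}
a^{\hh}B^+a=z^{\hh}BB^+Bz=z^{\hh}Bz,
\end{equation*}
using only $BB^+B=B$. This shows the quantity does not depend on which generalized inverse is chosen. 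Moreover, $w^{\hh}Bw=z^{\hh}Bz$, since $Bw=Bz=a$ forces $w-z\in\mathcal N(B)$. So the value is $z^{\hh}Bz$, which is real and nonnegative.

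\textbf{Step 4: the chain.} With the identity established, I would run the chain of \eqref{chain:FP to Schur}. The condition $C-A^{\hh}B^+A\succeq0$ is equivalent to $q^{\hh}Cq-\sup_y\{2\Re\{y^{\hh}Aq\}-y^{\hh}By\}\ge0$ for all $q$. This in turn is equivalent to $q^{\hh}Cq-2\Re\{y^{\hh}Aq\}+y^{\hh}By\ge0$ for all $q,y$. Finally, this is equivalent to $u^{\hh}Mu\ge0$ for all $u=[q^{\T},-y^{\T}]^{\T}$, i.e., $M\succeq0$.

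One point about Hermitian-ness needs care. If $B^+$ is not Hermitian, then $A^{\hh}B^+A$ should first be checked to be Hermitian, so that the statement ``$\succeq0$'' is meaningful. Writing $A=BZ$ gives $A^{\hh}B^+A=Z^{\hh}BZ$, which is Hermitian; this is the same trick as in Step 3.
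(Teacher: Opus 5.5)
Your proof is correct. It takes a different route from the paper, which gives no argument of its own for this theorem and simply cites Zhang's book. The standard textbook proof is a one-line congruence,
\begin{equation*}
P^{\hh}MP=\begin{bmatrix} C-A^{\hh}B^+A & 0\\ 0 & B\end{bmatrix},\qquad P=\begin{bmatrix} I_m & 0\\ -B^+A & I_n\end{bmatrix},
\end{equation*}
where the off-diagonal block equals $(A-BB^+A)^{\hh}=0$ by the range condition (for any, possibly non-Hermitian, $B^+$) and $P$ is invertible.

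You instead rerun the chain \eqref{chain:FP to Schur} of Section~\ref{sec:QT_to_Schur} in the singular setting. You supply step (b) with two ingredients: the vector case of the completing-the-square in Theorem~\ref{thm:FP:generalized inverse:Loewner}, and the preimage identity $A^{\hh}B^+A=T^{\hh}BT$ from Lemma~\ref{lemma:limit matrix frac}. This makes your argument the generalized-inverse analogue of ``Theorem~\ref{thm:FP} implies Theorem~\ref{thm:Schur}.'' The paper uses the present theorem to obtain Theorem~\ref{thm:FP:generalized inverse}, but your route is not circular, because your Steps~2--3 are proved directly.

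In substance the two arguments are the same computation. Your identity says $u^{\hh}Mu=q^{\hh}(C-A^{\hh}B^+A)q+(y-w)^{\hh}B(y-w)$ with $w=B^+Aq$, which is exactly the congruence above written as a quadratic form. The congruence is shorter and yields more: inertia information, and, since $|P|=1$, the determinant formula of Theorem~\ref{thm:Schur formula}. Your version makes the quadratic-transform origin of the result explicit, which is the point of the paper. Your separate check that $A^{\hh}B^+A$ is Hermitian is genuinely needed in your route, whereas in the congruence it comes for free because $P^{\hh}MP$ is Hermitian.
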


This theorem requires a new condition $(I_n-BB^+)A=0$, which turns out to be the condition needed in order to ensure that the Schur complement $C-A^{\hh}B^+A$ is well defined. 

Now, let $A:\mathcal X\rightarrow\mathbb C^{n\times m}$ and $B:\mathcal X\rightarrow\mathbb H^{n}_+$ be a pair of matrix functions, where $B(x)$ is not necessarily invertible. 
Consider an FP involving generalized inverse as formulated below, assuming 
the condition $(I_n-BB^+)A=0$, 
\begin{align}
\label{FP:prob:generalized inverse_1}
\underset{x\in \mathcal X}{\text{maximize}} &\quad \operatorname{Tr} (A^{\hh}B^{+}A). 
\end{align}
The above problem can be rewritten as 
\begin{subequations}
\label{generalized inverse FP:prob new}
\begin{align}
\underset{x \in \mathcal X}{\text{maximize}} &\quad \inf_{C\in\mathbb H^{m}}\operatorname{Tr} (C)\\
  \text{subject to} 
  &\quad\;\,C-A^{\hh}B^{+}A\succeq 0.
\label{}
\end{align}
\end{subequations} By Theorem \ref{thm:Schur:generalized inverse}, the
constraint can be replaced by $M \succeq 0$. In this case, we can then use the same Lagrangian procedure to
find its dual, which would lead to an extension of quadratic transform for generalized inverse. 
Observe that the replacement of $B^{-1}$ by $B^{+}$ does not impact the Lagrangian procedure, 
so we would arrive at the same reformulation as in \eqref{singl ratio:QT prob}. The above result can be readily extended to the multiple-ratio case, as stated in the following theorem.

\begin{theorem}[Quadratic Transform for FP with Generalized Inverse] 
\label{thm:FP:generalized inverse}
Consider $k$ pairs of matrix functions $A_i:\mathcal X\rightarrow \mathbb C^{n\times m}$ and $B_i:\mathcal X\rightarrow \mathbb H^{n}_{+}$, for $i=1,\ldots,k$. In particular, assume that $(I_n-B_iB^+_i)A_i=0$ for each $i$. The generalized sum-of-traces-of-matrix-ratio FP problem
\begin{align}
\label{FP:prob:generalized inverse_2}
\underset{x\in \mathcal X}{\text{maximize}} &\quad \sum^k_{i=1}\operatorname{Tr} (A^{\hh}_iB^{+}_iA_i) 
\end{align}
is equivalent to
\begin{align}
\label{QT prob:generalized inverse}
\underset{x \in \mathcal X, Y_1,\ldots,Y_k \in\mathbb C^{n\times m}}{\text{\ \ maximize}} &\quad \sum^k_{i=1}\operatorname{Tr}\!\big(2\Re\{Y_i^{\hh}A_i\}-Y_i^{\hh}B_iY_i\big) 
\end{align}
in the sense that the two problems have the same solution for $x$ and their optimal objective values are equal.
\end{theorem}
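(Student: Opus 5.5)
The plan is to prove the statement pointwise in $x$ and then optimize over $x$. If, for every fixed $x\in\mathcal X$ and every $i$,
\begin{equation*}
\sup_{Y_i\in\mathbb C^{n\times m}}\operatorname{Tr}\bigl(2\Re\{Y_i^{\hh}A_i\}-Y_i^{\hh}B_iY_i\bigr)=\operatorname{Tr}(A_i^{\hh}B_i^{+}A_i),
\end{equation*}
with the supremum attained, then maximizing jointly over $(x,Y_1,\ldots,Y_k)$ is the same as first maximizing each $Y_i$ separately (the objective is separable in the $Y_i$) and then maximizing over $x$. That gives equal optimal values and the same optimal $x$. So everything reduces to the single-ratio identity above. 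For this I will use a direct completing-the-square argument rather than the Lagrangian route: that route would also need a strong-duality justification once $B$ is singular.

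\textbf{Step 1: the candidate maximizer.} I will take $Y^\star=B^{+}A$. Then $BY^\star=BB^{+}A=A$ by the range condition $(I_n-BB^+)A=0$. Also $A^{\hh}B^{+}A=(BY^\star)^{\hh}B^{+}BY^\star=Y^{\star\hh}BB^{+}BY^\star=Y^{\star\hh}BY^\star$, using $BB^+B=B$ and $B=B^{\hh}$. Hence $\operatorname{Tr}(A^{\hh}B^{+}A)$ is real and equals $\operatorname{Tr}(Y^{\star\hh}BY^\star)$. This holds even though an arbitrary generalized inverse $B^+$ need not be Hermitian; only the defining identity $BB^+B=B$ is used.

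\textbf{Step 2: completing the square.} For arbitrary $Y$, I substitute $A=BY^\star$ to get
\begin{equation*}
\operatorname{Tr}\bigl(2\Re\{Y^{\hh}BY^\star\}-Y^{\hh}BY\bigr)=\operatorname{Tr}(Y^{\star\hh}BY^\star)-\operatorname{Tr}\bigl((Y-Y^\star)^{\hh}B(Y-Y^\star)\bigr).
\end{equation*}
Since $B\succeq0$, the last trace is nonnegative. So the supremum is attained at $Y=Y^\star$, and its value is $\operatorname{Tr}(A^{\hh}B^{+}A)$. Any $Y$ with $B(Y-Y^\star)=0$ is also optimal, which reflects the non-uniqueness caused by the singularity of $B$.

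\textbf{Step 3: consistency with the Schur-complement route.} To tie this to the paper's narrative, I will remark that the same conclusion follows from Theorem~\ref{thm:Schur:generalized inverse} via the Lagrangian steps of Section~\ref{sec:Schur_to_QT}. Those steps never invert $B$; the reduction $Z^\star=VV^{\hh}$ uses only $B\succeq0$. The one delicate point in that route is strong duality for the inner problem. It holds because Slater's condition is satisfied: take $C=\tau I$ with $\tau$ large, together with a feasible interior point of the dual. The range condition is exactly what keeps the primal value finite.

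\textbf{Main obstacle.} The key point is the role of the condition $(I_n-B_iB_i^+)A_i=0$. Without it, some column of $A_i$ has a component in $\mathcal N(B_i)$, and taking $Y_i$ along that direction makes the transformed objective unbounded. The equivalence would then fail, so the hypothesis is essential and is used precisely in Step 1.
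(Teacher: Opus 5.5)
Your main argument (Steps 1--2 plus separability in the $Y_i$) is correct, but it is not how the paper proves this statement. The paper writes each ratio as $\inf_{C}\operatorname{Tr}(C)$ subject to $C-A^{\hh}B^{+}A\succeq0$. It then replaces that constraint by $M\succeq0$ using the generalized-inverse Schur LMI (Theorem~\ref{thm:Schur:generalized inverse}), and reruns the Lagrangian-dual computation of Section~\ref{sec:Schur_to_QT}, observing that replacing $B^{-1}$ by $B^{+}$ leaves that computation unchanged. The auxiliary variable then appears as minus the off-diagonal block of the LMI multiplier. Your completing-the-square argument, built on $Y^\star=B^{+}A$, $BY^\star=A$ and $A^{\hh}B^{+}A=Y^{\star\hh}BY^\star$, is instead exactly the argument the paper uses later for the Loewner-order extension (Theorem~\ref{thm:FP:generalized inverse:Loewner}). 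The present statement is the case $f=\operatorname{Tr}$ of that theorem, applied to each separable term. Your route is elementary and self-contained, gives an explicit maximizer and the full set of maximizers, and needs no duality theory. The paper's route carries its conceptual message that any extension of the Schur LMI transfers mechanically to the quadratic transform. Its cost is a strong-duality step that the paper does not re-verify once $B$ is singular.

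On that point, the justification in your Step~3 is slightly off. When $B$ is singular, no $C$ (in particular not $C=\tau I$) makes $M$ positive definite, because $M$ contains the singular principal block $B$. So the primal LMI has no Slater point. Zero duality gap follows instead from two facts. First, the dual is strictly feasible: $V=0$ and $Z=I_n$ give $\Lambda=I_{m+n}\succ0$. Second, the dual value is bounded above by $\operatorname{Tr}(A^{\hh}B^{+}A)$, which is exactly what your Step~2 shows under the range condition. Since Step~3 is only a side remark, this does not affect your proof.
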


\subsection{Singular Denominators and Range Condition}

The range condition $(I_n-B_iB^+_i)A_i=0$ in Theorem \ref{thm:FP:generalized inverse} is inherited from the generalized
Schur complement \cite{Fuzhen_book}. We now clarify its meaning from the viewpoint of
singular matrix fractions.

For $B\succ 0$, the matrix fraction $A^\hh B^{-1}A$ is always well
defined. When $B\succeq 0$ is singular, however, the regularized quantity $A^\hh (B+\epsilon I)^{-1}A$ 
may diverge as $\epsilon$ tends to zero from above. The following lemma shows that
the range condition is exactly the condition under which the limiting matrix fraction remains finite.

\begin{lemma}
\label{lemma:limit matrix frac}
For $B\in\mathbb H_+^n$ and $A\in\mathbb C^{n\times m}$, the limit
\begin{equation}
    \lim_{\epsilon\downarrow 0}
    A^\hh(B+\epsilon I)^{-1}A
    \label{eq:regularized_fraction}
\end{equation}
exists as a finite Hermitian matrix if and only if
\begin{equation}
    \mathcal R(A)\subseteq \mathcal R(B), 
\end{equation}
which is equivalent to the condition
\begin{equation}
(I_n-BB^+)A=0.
    \label{eq:range_condition}
\end{equation}
When \eqref{eq:range_condition} holds, the limit is
\begin{equation}
    \lim_{\epsilon\downarrow 0}
    A^\hh(B+\epsilon I)^{-1}A
	=A^\hh B^+ A
	=A^\hh B^\dag A,
    \label{eq:regularized_limit_mp}
\end{equation}
where $B^\dag$ denotes the Moore-Penrose inverse of $B$. 
In other words, the generalized matrix fraction $A^\hh B^+A$ is independent of 
the choice of generalized inverse $B^+$ and is equal to $A^\hh B^\dag A$.

If $\mathcal R(A)\not\subseteq\mathcal R(B)$, then the limit diverges in
at least one direction; more specifically, there exists $u\in\mathbb C^m$ such
that
\begin{equation}
    \lim_{\epsilon\downarrow 0}
    u^\hh A^{\hh}(B+\epsilon I)^{-1}Au
    =
    +\infty.
    \label{eq:regularized_divergence}
\end{equation}
\end{lemma}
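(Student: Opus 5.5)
The plan is to diagonalize $B$ and reduce every claim to scalar computations. Since $B\in\mathbb H_+^n$, write $B=U\,\mathrm{diag}(\lambda_1,\ldots,\lambda_r,0,\ldots,0)\,U^\hh$ with $U$ unitary, $\lambda_j>0$ for $j\le r$, and $r=\operatorname{rank}(B)$. Split $U=[U_1\;U_0]$, so that $\mathcal R(B)=\mathcal R(U_1)$ and $\mathcal N(B)=\mathcal R(U_0)$. Then
\begin{equation*}
(B+\epsilon I)^{-1}=U_1(\Lambda_1+\epsilon I)^{-1}U_1^\hh+\epsilon^{-1}U_0U_0^\hh ,
\end{equation*}
where $\Lambda_1=\mathrm{diag}(\lambda_1,\ldots,\lambda_r)$. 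Consequently
\begin{equation*}
A^\hh(B+\epsilon I)^{-1}A=(U_1^\hh A)^\hh(\Lambda_1+\epsilon I)^{-1}(U_1^\hh A)+\epsilon^{-1}(U_0^\hh A)^\hh(U_0^\hh A).
\end{equation*}
The first term converges to $(U_1^\hh A)^\hh\Lambda_1^{-1}(U_1^\hh A)=A^\hh B^\dag A$, because $B^\dag=U_1\Lambda_1^{-1}U_1^\hh$.

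The second term is a positive semidefinite matrix scaled by $\epsilon^{-1}$. Its limit is finite if and only if $U_0^\hh A=0$, and this in turn holds if and only if $\mathcal R(A)\subseteq\mathcal N(B)^\perp=\mathcal R(B)$.

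\textbf{Divergence direction.} If $U_0^\hh A\neq 0$, choose $u$ with $U_0^\hh Au\neq0$. Then
\begin{equation*}
u^\hh A^\hh(B+\epsilon I)^{-1}Au\ge \epsilon^{-1}\|U_0^\hh Au\|_2^2\to+\infty,
\end{equation*}
since the first term is nonnegative. This gives \eqref{eq:regularized_divergence} and also the ``only if'' part of the first claim.

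\textbf{Equivalence with \eqref{eq:range_condition}.} From $BB^+B=B$, the matrix $BB^+$ is idempotent with $\mathcal R(BB^+)=\mathcal R(B)$, because $\mathcal R(B)=\mathcal R(BB^+B)\subseteq\mathcal R(BB^+)\subseteq\mathcal R(B)$. An idempotent acts as the identity on its range. Hence $(I-BB^+)A=0$ holds exactly when $\mathcal R(A)\subseteq\mathcal R(BB^+)=\mathcal R(B)$.

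\textbf{Independence of the generalized inverse.} Under the range condition, write $A=BX$ for some $X$; one can take $X=B^\dag A$, because $BB^\dag$ is the orthogonal projector onto $\mathcal R(B)$. Then
\begin{equation*}
A^\hh B^+A=X^\hh BB^+BX=X^\hh BX ,
\end{equation*}
and this expression does not depend on which $B^+$ is used. Taking $B^+=B^\dag$ yields $A^\hh B^+A=A^\hh B^\dag A$, which completes \eqref{eq:regularized_limit_mp}.

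The most delicate step is this last one. The generalized inverse $B^+$ need not be Hermitian or unique, so one cannot argue through a spectral form of $B^+$. The factorization $A=BX$ avoids that issue, and I expect it to suffice. Everything else is routine spectral calculus.
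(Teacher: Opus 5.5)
Your proposal is correct and follows essentially the same route as the paper's proof. Both use the spectral split $B=U_1\Lambda_1U_1^\hh$: the nullspace component $U_0^\hh A$ produces the $\epsilon^{-1}$ term, which yields both the limit $A^\hh B^\dag A$ and the divergent direction, and the factorization $A=BX$ shows that $A^\hh B^+A=X^\hh BX$ does not depend on the choice of $B^+$. Your idempotence argument for $\mathcal R(BB^+)=\mathcal R(B)$ makes the equivalence with $(I_n-BB^+)A=0$ slightly more explicit than the paper, which only appeals to $BB^+B=B$.
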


\begin{IEEEproof}
Let $r=\operatorname{rank}(B)$. Take the eigenvalue
decomposition of $B\succeq 0$ as
\begin{equation}
B=U_r\Lambda U_r^{\hh},
\end{equation}
where $\Lambda\in\mathbb H_{++}^r$ contains all the positive
eigenvalues of $B$, and the columns of $U_r$ form an orthonormal
basis of $\mathcal R(B)$. Let $U_0$ be an orthonormal basis of
$\mathcal N(B)$. Thus,
\begin{equation}
    \mathcal{R}(U_r) = \mathcal{R}(B),\qquad \mathcal{R}(U_0) = \mathcal{N}(B).
\end{equation}
Then $U=[U_r\ U_0]$ is unitary and
\begin{equation}
B
=
U
\begin{bmatrix}
\Lambda & 0\\
0 & 0
\end{bmatrix}
U^{\hh} .
\end{equation}
Write
\begin{equation}
U^{\hh}A=
\begin{bmatrix}
A_1\\
A_2
\end{bmatrix},
\end{equation}
or equivalently,
\begin{equation}
A=U_rA_1+U_0A_2 .
\end{equation}

We first show that $A_2=0$ is equivalent to
$\mathcal R(A)\subseteq \mathcal R(B)$.
Indeed, if $A_2=0$, then $A=U_rA_1$, so $\mathcal R(A)\subseteq\mathcal R(U_r) = \mathcal R(B)$. Conversely, if $\mathcal R(A)\subseteq\mathcal R(B)$,
then $A$ has no component along $\mathcal N(B)=\mathcal R(U_0)$, so $A_2=U_0^{\hh}A=0$.

Next, for any $\epsilon>0$,
\begin{equation}
\begin{aligned}
A^{\hh}(B+\epsilon I_n)^{-1}A
&=
A_1^{\hh}(\Lambda+\epsilon I_r)^{-1}A_1
+\frac{1}{\epsilon}A_2^{\hh}A_2 .
\end{aligned}
\end{equation}
If $A_2=0$, then the second term is zero, so
\begin{equation}
\lim_{\epsilon\downarrow0}
A^{\hh}(B+\epsilon I_n)^{-1}A
=
A_1^{\hh}\Lambda^{-1}A_1 .
\end{equation}
Moreover, since
\begin{equation}
B^\dag=U_r\Lambda^{-1}U_r^{\hh}
\end{equation}
and $A=U_rA_1$, we obtain
\begin{equation}
A_1^{\hh}\Lambda^{-1}A_1
=
A^{\hh}B^\dag A.
\end{equation}
Therefore, when $\mathcal R(A)\subseteq\mathcal R(B)$, the limit
exists as a finite Hermitian matrix and equals $A^{\hh}B^\dag A$. 

We next show that the same expression equals $A^HB^+A$ for any generalized
inverse $B^+$. The condition $\mathcal R(A)\subseteq\mathcal R(B)$ is equivalent to
the existence of a matrix $T$ such that
\begin{equation}
    A=BT.
    \label{eq:A_in_range_B}
\end{equation}
This is equivalent to
\begin{equation}
    (I_n-BB^+)A=0
    \label{eq:range_condition_ginv_equiv}
\end{equation}
for any generalized inverse $B^+$ of $B$, because by definition $BB^+B=B$. 
Under this condition, the
matrix fraction $A^{\hh}B^+A$ is independent of the particular choice of
generalized inverse. Indeed, if $A=BT$, then
\begin{equation}
    A^\hh B^+A
    =
    T^\hh BB^+BT
    =
    T^\hh BT,
    \label{eq:ginv_independence}
\end{equation}
which does not depend on the specific choice of $B^+$. Therefore,
under the range condition \eqref{eq:range_condition}, the generalized matrix fraction $A^\hh B^+A$
is well defined and coincides with the limiting expression
in \eqref{eq:regularized_limit_mp}, which is equal to $A^\hh B^+ A$.

It remains to consider the case $A_2\neq0$. In this case, there
exists a vector $u\in\mathbb C^m$ such that $A_2u\neq0$. Then
\begin{equation}
u^{\hh}A^{\hh}(B+\epsilon I)^{-1}Au
\geq
\frac{1}{\epsilon}\|A_2u\|_2^2
\to+\infty
\end{equation}
as $\epsilon$ tends to zero from above. Hence the limit cannot exist as a finite Hermitian matrix.
\end{IEEEproof}

\subsection{Generalized Quadratic Transform in Loewner Order}

Theorem \ref{thm:FP:generalized inverse} extends the quadratic transform to trace objectives involving
possibly singular denominator matrices. We now show that the same idea
can be further generalized beyond the trace objective. The key
observation is that, under the range condition, the quadratic transform
constructs a greatest element in the partial order defined by the positive semidefinite 
matrices, also known as the Loewner order.

For $B\succ 0$, the classical matrix quadratic transform is based on the identity
\begin{equation}
    A^\hh B^{-1}A
    =
    \max_V
    \left\{
        A^\hh V+V^\hh A-V^\hh BV
    \right\}.
    \label{eq:classical_matrix_qt}
\end{equation}
The next result extends this identity to the case $B\succeq 0$.

\begin{theorem}
\label{thm:FP:generalized inverse:Loewner}
Let $\mathcal X$ be a nonempty constraint set, and let
\begin{equation}
    A:\mathcal X\to\mathbb C^{n\times m},
    \qquad
    B:\mathcal X\to\mathbb H_+^n.
\end{equation}
Assume that for every $x\in\mathcal X$,
\begin{equation}
    \mathcal R(A(x))\subseteq \mathcal R(B(x)).
    \label{eq:theorem7_range_condition}
\end{equation}
Let $f:\mathbb H^m\to\mathbb R$ be nondecreasing with respect to the Loewner order, i.e.,
\begin{equation}
f(W_1) \ge f(W_2)\quad\text{whenever}\quad W_1\succeq W_2.
\label{eq:loewner_monotone_f}
\end{equation}
For each $x$, let $B(x)^+$ be any generalized inverse of $B(x)$.
Then, the matrix FP problem
\begin{equation}
    \operatorname*{maximize}_{x\in\mathcal X}\;
    f\left(A(x)^\hh B(x)^+A(x)\right)
    \label{eq:matrix_fp_ginv}
\end{equation}
is equivalent to
\begin{equation}
    \operatorname*{maximize}_{x\in\mathcal X,\;V\in\mathbb C^{n\times m}}\;
    f\left(\Gamma(x,V)\right)
    \label{eq:matrix_qt_ginv}
\end{equation}
where
\begin{equation}
\Gamma(x,V) = A(x)^\hh V + V^\hh A(x) - V^\hh B(x)V,  \label{eq:Q_def_ginv}
\end{equation}
in the sense that the two problems have the same optimal objective value and the same optimal solutions for $x$.  Moreover, for fixed $x$, the following is an optimal solution for $V$:
\begin{equation}
    V_0 = B(x)^+ A(x).
    \label{eq:theorem7_optimal_V}
\end{equation}
This solution satisfies $B(x) V_0 = A(x)$. In fact, if $f(\cdot)$ is strictly increasing in the Loewner order, then the optimal $V$ must all satisfy $B(x) V = A(x)$. 
\end{theorem}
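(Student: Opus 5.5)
The plan is to prove a pointwise Loewner-order statement for each fixed $x$. Namely, for every $V\in\mathbb C^{n\times m}$,
\begin{equation*}
A^\hh B^+A-\Gamma(x,V)\succeq 0,
\end{equation*}
with equality at $V=V_0=B^+A$. Equivalently, $A^\hh B^+A$ is the greatest element of $\{\Gamma(x,V)\}_V$ in the Loewner order.

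By the range condition and Lemma \ref{lemma:limit matrix frac}, write $A=BT$ for some $T$; then $A^\hh B^+A=T^\hh BT$, independently of the choice of $B^+$. Completing the square, using $A^\hh V=T^\hh BV$ and $V^\hh A=V^\hh BT$, gives
\begin{equation*}
T^\hh BT-\Gamma(x,V)=T^\hh BT-T^\hh BV-V^\hh BT+V^\hh BV=(T-V)^\hh B(T-V)\succeq 0,
\end{equation*}
since $B\succeq0$. This is the key step. The only subtlety is that no inverse of $B$ may appear, and the factorization $A=BT$ removes that need entirely.

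Next I check $V_0$. From $BB^+A=A$ (the range condition in the form $(I_n-BB^+)A=0$) we get $BV_0=A$, so $V_0$ is itself a valid choice of $T$. Then
\begin{equation*}
\Gamma(x,V_0)=2V_0^\hh BV_0-V_0^\hh BV_0=V_0^\hh BV_0=A^\hh B^+A.
\end{equation*}
Since $f$ is Loewner-nondecreasing, this yields $\sup_V f(\Gamma(x,V))=f(\Gamma(x,V_0))=f(A^\hh B^+A)$ for every $x$, with the supremum attained at $V_0$. The equality of optimal values and of optimal $x$-sets then follows exactly as in Theorem \ref{thm:FP}. If $(x^\star,V^\star)$ is optimal for \eqref{eq:matrix_qt_ginv}, then $x^\star$ maximizes $x\mapsto\max_V f(\Gamma(x,V))=f(A^\hh B^+A)$. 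Conversely, if $x^\star$ is optimal for \eqref{eq:matrix_fp_ginv}, then $(x^\star,V_0(x^\star))$ is optimal for \eqref{eq:matrix_qt_ginv}.

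For the strict case, suppose $f$ is strictly increasing, meaning $W_1\succeq W_2$ and $W_1\neq W_2$ imply $f(W_1)>f(W_2)$. Any optimal $V$ must then satisfy $\Gamma(x,V)=A^\hh B^+A$, because otherwise $f(\Gamma(x,V))<f(A^\hh B^+A)=f(\Gamma(x,V_0))$. Taking $T=V_0$ in the identity above, this means $(V_0-V)^\hh B(V_0-V)=0$. Since $B\succeq0$, this forces $B^{1/2}(V_0-V)=0$, hence $B(V_0-V)=0$, i.e., $BV=BV_0=A$.

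I expect no real obstacle. The one point needing care is justifying $A^\hh B^+A=T^\hh BT$ for an arbitrary generalized inverse, and that is already supplied by Lemma \ref{lemma:limit matrix frac}.
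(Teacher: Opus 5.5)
Your proof is correct and follows essentially the same route as the paper: write $A=BT$, complete the square to get $\Gamma(x,V)=A^{\hh}B^+A-(V-V_0)^{\hh}B(V-V_0)\preceq A^{\hh}B^+A$ with equality at $V_0=B^+A$, then use the Loewner monotonicity of $f$. The only differences are presentational: you complete the square around a generic $T$ before specializing to $T=V_0$, and you spell out the strict case (via $B^{1/2}(V_0-V)=0$) and the matching of optimal $x$-sets more explicitly than the paper does.
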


\begin{IEEEproof}
Fix an arbitrary $x\in\mathcal X$, and omit the argument $x$ for $A(x)$ and $B(x)$ in the rest of the proof whenever doing so does not cause confusion. By the range condition
    \eqref{eq:theorem7_range_condition}, there exists a matrix $T$ such that
$A=BT$. Hence, for any generalized inverse $B^+$,
\begin{equation}
    BB^+A
    =
    BB^+BT
    =
    BT
    =
    A.
    \label{eq:BBplusA_equals_A}
\end{equation}
Now, let
\begin{equation}
    V_0=B^+A
\end{equation}
for which we have
\begin{equation}
    BV_0=A.
\end{equation}
Since $BV_0=A=BT$,
\begin{equation}
    V_0^\hh BV_0
    =
    T^\hh BT
    =
    A^\hh B^+A.
    \label{eq:V0_B_V0}
\end{equation}
This allows us to complete the square for $\Gamma(x,V)$ as follows:  
\begin{align}
    \Gamma(x,V)
    &=
    A^{\hh}V+V^{\hh}A-V^{\hh}BV  \notag\\
    &=
    A^{\hh}B^+A
    -
    (V-V_0)^{\hh}B(V-V_0),
    \label{eq:loewner_completion_square}
\end{align}
where we have utilized \eqref{eq:ginv_independence} and \eqref{eq:V0_B_V0}. 
Since $B\succeq 0$, we have
\begin{equation}
    (V-V_0)^{\hh}B(V-V_0)\succeq 0.
    \label{eq:optimal_V}
\end{equation}
Therefore,
\begin{equation}
    \Gamma(x,V)
    \preceq
    A^{\hh}B^+A,
    \qquad
    \forall V\in\mathbb C^{n\times m},
    \label{eq:Q_loewner_upper_bound}
\end{equation}
with equality achieved at $V=V_0=B^+A$. Thus, $V_0$ is one particular choice of optimal $V$. 

Now, we have established that $A^\hh B^+A$ is the greatest element among
the matrices $\Gamma(x,V)$ in terms of the Loewner order. Since $f$ is nondecreasing with respect to the Loewner order, we have
\begin{equation}
    f(\Gamma(x,V))
    \leq
    f(A^\hh B^+A),
    \qquad
    \forall V.
\end{equation}
The equality is attained by $V=B^+A$. Hence for any fixed $x$,
\begin{equation}
    f(A^\hh B^+A)
    =
    \max_{V\in\mathbb C^{n\times m}}
    f\left(A^\hh V+V^\hh A-V^\hh BV\right).
    \label{eq:fixed_x_matrix_qt_ginv}
\end{equation}
Maximizing both sides over $x\in\mathcal X$ proves the equivalence
between \eqref{eq:matrix_fp_ginv} and \eqref{eq:matrix_qt_ginv}.

Finally, if $f$ is increasing, then any optimal $V$ must be such that $BV = B V_0$ in order to achieve equality in \eqref{eq:optimal_V}, so any optimal $V$ must satisfy $BV = A$.
\end{IEEEproof}

\begin{remark}
\label{remark:sol of V}
The optimal auxiliary variable $V$ in Theorem \ref{thm:FP:generalized inverse:Loewner}
may not be unique. In particular, when $B(x)$ is singular, every matrix of the form
\begin{equation}
    V=B(x)^+A(x)+U
    \label{eq:V_nonunique_ginv}
\end{equation}
along with some matrix $U$ satisfying
\begin{equation}
    B(x)U=0
\end{equation}
attains the Loewner upper bound in \eqref{eq:Q_loewner_upper_bound}.
\end{remark}


\begin{remark}
If $B(x)\succ 0$, then the range condition is automatic, the
generalized inverse reduces to the ordinary inverse, and the optimal
auxiliary variable is unique:
\begin{equation}
    V=B(x)^{-1}A(x).
\end{equation}
In this case, Theorem \ref{thm:FP:generalized inverse:Loewner} reduces to the classical matrix quadratic transform \cite{FP_SPM}.
\end{remark}

\section{Minimax Duality for Gaussian Vector Broadcast Channel via Generalized FP}
\label{sec:duality}

To illustrate the usefulness of generalized FP theory, this section applies the generalized FP to the least-favorable-noise minimax characterization of the Gaussian vector BC sum capacity \cite{Yu2006}. The generalized quadratic transform in Theorem \ref{thm:FP:generalized inverse:Loewner} is able to easily handle the case of possibly singular noise covariance and can be used to directly establish the BC-MAC sum-capacity duality.



\subsection{Channel Models}

\subsubsection{Broadcast Channel}
Consider a $k$-user Gaussian vector BC, where the common transmitter has $M$ transmit antennas and user $i$ has $N_i$ receive antennas. The BC channel model is
\begin{equation}
    Y_i = G_iX+Z_i,\qquad i=1,\ldots,k,
    \label{eq:bc_channel}
\end{equation}
where $G_i\in\mathbb C^{N_i\times M}$, $X\in\mathbb C^M$, and $Z_i\sim\mathcal{CN}(0,I_{N_i})$. Define the stacked channel matrix
\begin{equation}
    H=
    \begin{bmatrix}
        G_1\\
        \vdots\\
        G_k
    \end{bmatrix}
    \in\mathbb C^{N\times M},
    \label{eq:stacked_channel}
\end{equation}
where $N=\sum_{i=1}^k N_i$. Assume that $H\neq 0$. The input covariance matrix $S=\mathbb E[XX^\hh]$ is subject to the normalized sum-power constraint $\operatorname{Tr}(S)\le1$.

\subsubsection{Multiple-Access Channel}

The sum capacity of the Gaussian vector BC can be characterized in terms of the sum capacity of a reciprocal Gaussian vector MAC \cite{Pramod_BC, Sriram_BC}. This reciprocal MAC is obtained by reversing the channel matrices. Thus, the user $i$ in the MAC has $N_i$ transmit antennas, and the receiver has $M$ antennas. The reciprocal MAC model is
\begin{equation}
    \widetilde Y=\sum_{i=1}^k G_i^\hh \widetilde X_i+
    \widetilde Z=H^\hh \begin{bmatrix}
        \widetilde X_1\\
        \vdots\\
        \widetilde X_k
    \end{bmatrix}+ \widetilde Z,
    \label{eq:mac_channel}
\end{equation}
where $\widetilde X_i\in\mathbb C^{N_i}$
and $\widetilde Z\sim\mathcal{CN}(0,I_M)$.
The transmit covariance matrix of the $i$-th MAC user is $W_i=\mathbb E[\widetilde X_i\widetilde X_i^{\hh}]$. The duality between BC and MAC is established under a sum power constraint $\operatorname{Tr}(W) \le 1$,
where $W=\operatorname{blkdiag}(W_1,\ldots,W_k)$.

\subsection{Minimax Characterization of BC Sum Capacity}
\label{subsec:review minimax}

A key result in the analysis of the Gaussian vector BC is that its sum capacity admits the following characterization:
\begin{equation}
\begin{aligned}
    \min_{D}\ \max_{S}\quad
    &\lim_{\epsilon\downarrow 0}
    \log\left|
        I_N+HSH^{\hh}(D+\epsilon I_N)^{-1}
    \right|\\
    \text{\ \ subject to}\quad
    &\,S\succeq 0,\quad D\succeq 0\\
    &\operatorname{Tr}(S)\leq 1,\\
    &\,\mathcal P_{\rm blk}(D)=I_{N},
\end{aligned}
\label{eq:lfn_minimax}
\end{equation}
where notationally, for an $N\times N$ matrix partitioned according to dimensions
$N_1,\ldots,N_k$, we use a projection operator 
\begin{equation}
\mathcal P_{\rm blk}(D)=
\operatorname{blkdiag}(D_{11},\ldots,D_{kk}),
\end{equation}
to denote its block diagonal terms $D_{ii}=\mathbb E[Z_iZ^\hh_i]$.
Here, $D$ is the covariance matrix of an
artificially correlated Gaussian noise vector across the receivers called
\emph{least-favorable noise} \cite{Yu_BC}. The constraint
$\mathcal P_{\rm blk}(D)=I_N$ states that the artificial noise vectors may be
arbitrarily correlated across users, while the covariance matrices of their marginals
remain equal to the identity matrix.
The limiting expression in the objective takes into account the possibility that $D$ may be singular.  

The fact that \eqref{eq:lfn_minimax} is an outer bound of 
the sum capacity of the BC is due to \cite{Sato_BC}. 
The achievability is established in \cite{Yu_BC} based on 
a decision-feedback equalization technique and in \cite{Sriram_BC, Pramod_BC}
based on the uplink-downlink duality. In \cite{Yu2006}, an alternative proof
of sum-capacity duality based directly on the minimax characterization
\eqref{eq:lfn_minimax} is provided, but a complication arises in its derivation in that 
the least-favorable noise covariance $D$ may be singular. The minimax formulation provides a natural application scenario for generalized FP, 
because it is capable of handling generalized inverse of the possibly singular 
noise covariance. 
In this paper, we re-derive the duality result based on generalized FP. 

Before utilizing the tool of generalized FP, we first briefly review the minimax duality 
result in \cite{Yu2006}
under a set of simplifying assumptions that 
\begin{enumerate}[(i)]
    \item the stacked channel matrix $H$ is square and nonsingular;
    \item the optimal $D$ is invertible;
    \item the optimal $S$ is invertible.
\end{enumerate}
Then, the Karush-Kuhn-Tucker (KKT) conditions of the minimax formulation \eqref{eq:lfn_minimax} reduce to
\begin{equation}
    H^\hh (HSH^\hh+D)^{-1}H= \lambda I_M,
    \label{eq:direct_kkt_S}
\end{equation}
and
\begin{equation}
    D^{-1}-(HSH^\hh+D)^{-1}=\Delta,
    \label{eq:direct_kkt_D}
\end{equation}
where $\lambda \ge 0$ is the multiplier associated with the transmit covariance constraints, 
and $\Delta=\mathcal P_{\rm blk}(\Delta)$ is the block-diagonal
multiplier associated with the constraint $\mathcal P_{\rm blk}(D)=I_N$.

It is shown in \cite{Yu2006} that \eqref{eq:direct_kkt_S} and \eqref{eq:direct_kkt_D} can be solved
explicitly. From \eqref{eq:direct_kkt_D}, we have
\begin{equation}
    (HSH^\hh+D)^{-1}=D^{-1}-\Delta.
\end{equation}
Substituting this into \eqref{eq:direct_kkt_S} gives
\begin{equation}
    H^\hh D^{-1}H
    =
    H^\hh\Delta H+  \lambda I_M. 
    \label{eq:direct_HDinvH}
\end{equation}
Thus,
\begin{equation}
    H\left(H^\hh\Delta H+ \textstyle  \lambda I_M \right)^{-1}H^\hh
    =
    D.
    \label{eq:direct_D_solution}
\end{equation}
Moreover, \eqref{eq:direct_kkt_S} implies
\begin{equation}
    HSH^\hh+D
    =
    \frac{1}{\lambda}H H^\hh.
\end{equation}
Combining this identity with \eqref{eq:direct_D_solution}, we obtain
\begin{equation}
  \frac{1}{\lambda}I_M
    -
    \left(H^\hh\Delta H+\lambda I_M \right)^{-1}
    =
    S.
    \label{eq:direct_S_solution}
\end{equation}
Substituting \eqref{eq:direct_HDinvH} and
\eqref{eq:direct_S_solution} into the BC objective yields
\begin{equation}
\log\left|I_M+SH^\hh D^{-1}H\right|
    = 
\log\left|I_M + \frac{1}{\lambda}H^\hh\Delta H \right|.
    \label{eq:direct_kkt_logdet_identity}
\end{equation}
The right-hand side above can be seen as the dual MAC capacity expression with 
\begin{equation} W = \frac{\Delta}{\lambda} \label{eq:dual_W_delta} \end{equation}
as the transmit covariance matrix with a normalized noise covariance, thus establishing uplink-downlink duality.

Furthermore, the diagonal part of \eqref{eq:direct_D_solution} is the KKT condition
of the following MAC sum-capacity maximization problem:
\begin{equation}
\begin{aligned}
\underset{W}{\text{maximize}}\quad
&
\log\left|
I_M+H^{\hh} W H\right|\\
    \text{subject to}\quad
    &\, W=\operatorname{blkdiag}(W_1,\ldots,W_k) \succeq 0 \\
    &\operatorname{Tr}(W)\le 1.
\end{aligned}
\label{eq:MAC_minimax}
\end{equation}
The above is the central algebraic derivation behind the Lagrangian duality
interpretation of uplink-downlink duality in \cite{Yu2006}. It shows that
the sum-capacity maximizing transmit covariance matrices of the dual MAC are 
precisely the optimal dual variables associated with the least favorable noises
in the BC sum-capacity maximization problem. 

However, the derivation
above makes the simplifying assumptions (i)-(iii) above. In \cite{Yu2006},
considerable effort is devoted to the analysis of the signal and noise
subspaces to derive the result for the case that $D$ is singular.  In the following, 
we show that the generalized FP can be utilized to directly handle
the possibly singular $D$.

\subsection{Generalized FP Reformulation}

Consider the BC sum-capacity objective \eqref{eq:lfn_minimax}.
If the range condition $\mathcal R(H)\subseteq \mathcal R(D)$ holds, then 
by Lemma \ref{lemma:limit matrix frac}, the regularized objective of \eqref{eq:lfn_minimax} is equal to
\begin{align}
	\log\left|I_M+\sqrt{S}H^{\hh}D^\dag H \sqrt{S}\right|.
    \label{eq:lfn_limit_pinv}
\end{align}
In this case, by defining
\begin{equation}
    \Gamma
    =
    H^\hh V+V^\hh H-V^\hh DV,
    \label{eq:Gamma_def_lfn}
\end{equation}
problem \eqref{eq:lfn_minimax} can be recast 
using Theorem \ref{thm:FP:generalized inverse:Loewner} as
\begin{equation}
\begin{aligned}
    \min_D\ \max_{S,V}\quad
	&\log\left|I_M+ \sqrt{S}\Gamma \sqrt{S}\right|\\
    \text{subject to}\quad
    &\,S\succeq 0,\quad D\succeq 0\\
    &\operatorname{Tr}(S)\leq 1\\
    &\,\mathcal P_{\rm blk}(D)=I_{N}.
\end{aligned}
\label{eq:lfn_qt_reformulation}
\end{equation}
On the other hand, if $\mathcal{R}(H)\not\subseteq\mathcal{R}(D)$, the inner maximization in \eqref{eq:lfn_minimax} is unbounded, so such a $D$ cannot be optimal in the outer minimization.
Thus, it is without loss of generality to assume $\mathcal R(H)\subseteq \mathcal R(D)$ and to specialize the generalized inverse as the Moore-Penrose pseudo-inverse in this problem. 

The above generalized-FP reformulation avoids the difficulty of having to differentiate with respect to a possibly singular $D$ in the pseudo-inverse in \eqref{eq:lfn_limit_pinv}---which is nonsmooth in $D$ when it changes rank.
Instead, \eqref{eq:lfn_qt_reformulation} provides a smooth representation in the lifted space $(D,V)$. This is different from the technique used in \cite{Yu2006} that makes the objective function smooth in $D$ by reducing the channel to an active subspace wherein the reduced noise covariance is nonsingular.

We now write the KKT conditions for the BC sum-capacity problem in this new form. The stationarity conditions of \eqref{eq:lfn_qt_reformulation} with respect to $S$ and $D$ are
\begin{equation} 
\label{S-stationarity}
\Gamma(I_M+ S \Gamma )^{-1} + \Phi = \lambda I_M,
\end{equation}
and
\begin{equation}
\label{D-stationarity}
V(I_M+ S \Gamma)^{-1}SV^\hh + \Omega = \Delta,
\end{equation}
where $\Phi\succeq0$ is the Lagrange multiplier for $S\succeq0$; $\lambda\ge0$ is the multiplier for $\operatorname{Tr}(S)\le1$; $\Omega\succeq0$ is the Lagrange multiplier for $D\succeq0$; and $\Delta$ is a block-diagonal matrix, whose block diagonal terms are the Lagrange multipliers 
corresponding to $D_{ii}=I_{N_i}$. 
Note that the first terms in both \eqref{S-stationarity} and \eqref{D-stationarity} are Hermitian matrices.
This can be seen by moving a factor $\sqrt{\Gamma}$ and $\sqrt{S}$ in and out of the inverse term from the left or right. At the optimum, we have $\Gamma \succeq 0$, as will be verified shortly in \eqref{eq:Gamma_identities}. 

Furthermore, the complementary slackness conditions are
\begin{equation}
\label{complementary slackness:BC}
\operatorname{Tr}(\Phi S)=0,\quad \operatorname{Tr}(\Omega D) = 0, \quad \lambda\left(\operatorname{Tr}(S)-1\right)=0.
\end{equation}
Note that \eqref{complementary slackness:BC} implies $\Phi S=S\Phi=0$ and $\Omega D=D\Omega=0$.

The new ingredient here is the auxiliary variable $V$.  By Theorem \ref{thm:FP:generalized inverse:Loewner}, one optimal $V$ is $V = D^+ H$, which satisfies 
\begin{equation}
    DV=H.
    \label{eq:DV_equals_H}
\end{equation}
This is in fact an alternative expression of the range condition $\mathcal{R}(H)\subseteq\mathcal{R}(D)$. 
For such optimal $V$, we have
\begin{equation}
    \Gamma = H^{\hh}D^{\dag} H = H^{\hh}V = V^{\hh}H = V^{\hh}DV.
    \label{eq:Gamma_identities}
\end{equation}
This key identity is useful in the subsequent developments.

For the reciprocal MAC problem in \eqref{eq:MAC_minimax},
the stationarity condition with respect to $W$ is
\begin{equation}
\label{W-stationarity}
	\mathcal{P}_{\rm blk}\Bigl(H( I_M+H^\hh W H)^{-1}H^\hh\Bigr)+\Psi = \mu I_N,
\end{equation}
where $\Psi \succeq0$ and its block diagonal terms are the Lagrange multipliers for $W_i \succeq0$; $\mu\ge0$ is the Lagrange multiplier for $\operatorname{Tr}(W)\le 1$. 

The corresponding complementary slackness conditions are
\begin{equation}
\label{complementary slackness:MAC}
    \operatorname{Tr}(\Psi W)=0,\quad \mu\bigl(\operatorname{Tr}(W)-1\bigr)=0.
\end{equation}
Since $\Psi\succeq0$ and $W\succeq0$, \eqref{complementary slackness:MAC} implies $\Psi W=W\Psi=0$. 

We now show that the BC-KKT conditions 
\eqref{S-stationarity}-\eqref{complementary slackness:BC}
and the MAC-KKT conditions 
\eqref{W-stationarity}-\eqref{complementary slackness:MAC}
are equivalent without having to make the simplifying assumptions (i)-(iii). 
Because $S$ and $D$ are no longer assumed to be positive definite, 
we need to explicitly characterize the dual variables $\Phi$ and $\Omega$.

\subsection{Equivalence of BC-KKT and MAC-KKT Conditions}
\label{subsec:BC to MAC}

Let $(S,V,D,\Phi,\lambda,\Omega,\Delta)$ be a primal-dual KKT-tuple of the BC problem \eqref{eq:lfn_qt_reformulation}, where $DV=H$. We show that the following construction gives the KKT-tuple of the MAC problem \eqref{eq:MAC_minimax}:
\begin{align}
W &= \frac{\Delta}{\lambda}, \label{construct W} \\
\Psi &= \lambda I_N-\mathcal{P}_{\rm blk}\bigl(H (I_M+H^\hh W H)^{-1}H^\hh\bigr),\label{construct Psi}\\
\mu &= \lambda.\label{construct mu}
\end{align}

Here, \eqref{construct W} is inspired by \eqref{eq:dual_W_delta}. 
Since $\Delta$ is block diagonal, so is $W$. Furthermore, $\Omega \succeq 0$ implies $\Delta \succeq 0$. Together with $\lambda > 0$, we have $W \succeq 0$. Moreover, 
\eqref{construct Psi} is constructed to satisfy the stationarity condition \eqref{W-stationarity} for the MAC. 

To show that this set of $(W, \Psi, \mu)$ satisfies the MAC-KKT condition, it remains to check that (a) the power constraint is satisfied; (b) $\Psi \succeq 0$; and (c) the complementary slackness condition $\operatorname{Tr}(\Psi W)=0$ is met. 

First, to evaluate the power constraint $\operatorname{Tr}(W)$, we observe that the following chain of equalities holds:
\begin{align}
&\operatorname{Tr}(\Delta) = \operatorname{Tr}(\Delta D) = \operatorname{Tr}\left(V(I_M+ S \Gamma)^{-1}SV^\hh D\right)\notag\\
&= \operatorname{Tr}\left((I_M+ S \Gamma)^{-1}SV^\hh DV\right) = \operatorname{Tr}((I_M+S\Gamma)^{-1}S\Gamma)\notag\\
& = \operatorname{Tr}(\Gamma(I_M+S\Gamma)^{-1}S)=\lambda\operatorname{Tr}(S),
\label{lambda Tr(s)}
\end{align}
where the first line uses the fact that $\Delta$ is block diagonal and $D$ has identity matrices on its block diagonal, and further we make use of the $D$-stationarity condition \eqref{D-stationarity} and $\Omega D=0$; the second line uses \eqref{eq:Gamma_identities}; and the last line uses the $S$-stationarity condition \eqref{S-stationarity} and $\Phi S=0$. Thus,
\begin{equation}
\label{tr W = tr S}
    \operatorname{Tr}(W) = \operatorname{Tr}(S)\le 1.
\end{equation}
Note that the MAC power complementary slackness condition follows from its BC counterpart.

Next, to check that $\Psi \succeq 0$, 
we rewrite  $\Psi$ in \eqref{construct Psi} using 
$\mathcal{P}_{\rm blk}(D) = I_N$ and $\Delta=\lambda W$ as 
\begin{equation}
\label{Psi new}
\Psi = \lambda\mathcal{P}_{\rm blk}\Bigl(D-H(\lambda I_M+H^\hh \Delta H)^{-1}H^\hh\Bigr).
\end{equation}
Since $\lambda>0$, to show $\Psi \succeq 0$, it is sufficient to show
\begin{equation}
\label{Psi remove blk}
 D-H(\lambda I_M+H^\hh \Delta H)^{-1}H^\hh\succeq0,
\end{equation}
which, by the Schur-complement LMI, is equivalent to
\begin{equation}
\label{Psi Schur complement}
\begin{bmatrix}
    D & H\\
    H^\hh & \lambda I_M+H^\hh\Delta H
\end{bmatrix}
\succeq0.
\end{equation}
We evaluate the bottom-right entry of the above matrix as
\begin{align}
\lambda I_M + H^\hh \Delta H &= \lambda I_M + \Gamma (I_M+S\Gamma)^{-1}S \Gamma\notag\\
&=\lambda I_M + (\lambda I_M-\Phi)S\Gamma\notag\\
&=\lambda I_M + \lambda S\Gamma\notag\\
&=\Gamma + \Phi\notag\\
&= V^\hh D V + \Phi,\label{lambda S Gamma}
\end{align}
where the first line uses the $D$-stationarity condition \eqref{D-stationarity} together with \eqref{eq:Gamma_identities}
and $\Omega H=\Omega DV=0$; the second line uses the $S$-stationarity condition \eqref{S-stationarity}; the third line uses $\Phi S=0$; the fourth line uses the $S$-stationarity condition \eqref{S-stationarity} (with $(I_M+S\Gamma)$ 
multiplied on both sides) together with $\Phi S=0$; and the last line uses \eqref{eq:Gamma_identities}.

Using \eqref{lambda S Gamma} and $H=DV$ in \eqref{eq:DV_equals_H}, we can verify the positive semidefiniteness in \eqref{Psi Schur complement} as
\begin{align}
\label{Psi>=0:2}
&\begin{bmatrix}
    D & H\\
    H^\hh & \lambda I_M+H^\hh\Delta H
\end{bmatrix}
=
\begin{bmatrix}
    D & DV\\
    V^\hh D & V^\hh D V + \Phi
\end{bmatrix}\notag\\
&=
\begin{bmatrix}
    I_N\\ V^\hh
\end{bmatrix}
D
\begin{bmatrix}
    I_N & V
\end{bmatrix}
+
\begin{bmatrix}
    0 & 0\\
    0 & \Phi
\end{bmatrix}
\succeq0,
\end{align}
thereby establishing $\Psi\succeq0$. 

Finally, to check that  $\operatorname{Tr}(\Psi W)=0$, we have
\begin{align}
    \operatorname{Tr}(\Psi W) &= \lambda\operatorname{Tr}(W)-\lambda\operatorname{Tr}(H(\lambda I_M+H^\hh \Delta H)^{-1}H^\hh W)\notag\\
    &= \lambda\operatorname{Tr}(W)-\lambda\operatorname{Tr}((\lambda I_M+H^\hh \Delta H)^{-1}H^\hh W H)\notag\\
    &= \lambda\operatorname{Tr}(W)-\lambda\operatorname{Tr}((\lambda I_M+H^\hh \Delta H)^{-1}S\Gamma)\notag\\
    &=  \lambda\operatorname{Tr}(W)-\lambda\operatorname{Tr}(S) = 0,
\end{align}
where the third line uses $H^\hh \Delta H=\lambda S\Gamma$ from \eqref{lambda S Gamma} together with $\Delta=\lambda W$, and the last line uses $\Gamma = \lambda I_M+H^\hh\Delta H-\Phi$ also from \eqref{lambda S Gamma} together with $S\Phi=0$. 

Thus, the constructed $(W,\Psi,\mu)$ satisfy the KKT conditions of the reciprocal MAC problem.
The key observation here is that the above derivation does not assume that $S$, $D$ and $H$ are invertible. The use of the auxiliary variable $V$ allows the rank-deficient cases to be handled automatically.

\subsection{Sum Capacities of BC and MAC}

As a final step, we can also check that the sum capacities of the BC and MAC are equal. 
Let $(S^\star,D^\star)$ be an optimal saddle point of problem \eqref{eq:lfn_minimax}, and choose $V^\star$ such that $D^\star V^\star=H$. By Theorem \ref{thm:FP:generalized inverse:Loewner}, the BC sum capacity is given by
\begin{align}
\mathcal C_{\rm BC}=\log\left|I_M+S^\star\Gamma^\star\right|,
\end{align}
where 
\begin{equation}
 \Gamma^\star = H^\hh (D^\star)^\dag H = V^{\hh}D^\star V. 
\end{equation}
Let $W^\star$ be the corresponding MAC covariance obtained from the KKT mapping \eqref{construct W}-\eqref{construct mu}. It follows from the third line of \eqref{lambda S Gamma} that the MAC sum capacity satisfies
\begin{align}
\mathcal C_{\rm MAC} & = \log\left|I_M+H^\hh W^\star H\right| \nonumber \\
& = \log\left|I_M+S^\star \Gamma^\star\right| \nonumber \\
& = \mathcal C_{\rm BC}
\end{align}
Hence, the generalized FP formulation establishes the BC–MAC sum-capacity duality regardless of whether the least-favorable noise covariance $D^\star$ is singular.

\section{Conclusion}
\label{sec:conclude}

This paper shows that the quadratic transform technique for solving FP is closely connected to the Schur complement in the sense that they can be derived from each other. The auxiliary variable in FP is the dual variable associated with an LMI in the Schur complement reformulation of the FP; it can also be interpreted in the context of an MMSE estimation problem. This connection leads to new generalizations and an alternative proof of the Schur-complement determinant formula. 

Moreover, we apply the generalized inverse viewpoint to the least-favorable-noise minimax formulation of Gaussian vector broadcast channels. The possible singularity of the least-favorable noise covariance makes direct inverse-based KKT analysis inconvenient. We show that the generalized quadratic transform provides an auxiliary-variable representation of the singular matrix fraction and directly yields the reciprocal MAC covariance from the KKT multipliers. This formulation avoids explicit inversion of the noise covariance and recovers the uplink-downlink duality relation without separate support-space reductions.

\bibliographystyle{IEEEtran}
\bibliography{IEEEabrv,refs}

@STRING{IEEE_J_SP         = "{IEEE} Trans. Signal Process."}

@STRING{IEEE_J_COM        = "{IEEE} Trans. Commun."}

@STRING{IEEE_J_WCOM       = "{IEEE} Trans. Wireless Commun."}

@STRING{IEEE_J_IT         = "{IEEE} Trans. Inf. Theory"}

@STRING{IEEE_J_IOT        = "{IEEE} Internet Things J."}

@STRING{IEEE_M_SP         = "{IEEE} Signal Process. Mag."}

@STRING{IEEE_ISIT        = "Proc. {IEEE} Int. Symp. Inf. Theory (ISIT)"}

@STRING{IEEE_ICASSP        = "Proc. {IEEE} Int. Conf. Acoust., Speech, Signal Process. (ICASSP)"}

@article{Yu2006,
  author  = {Wei Yu},
  title   = {Uplink--Downlink Duality via Minimax Duality},
  journal = {IEEE Trans. Inf. Theory},
  volume  = {52},
  number  = {2},
  pages   = {361--374},
  month   = feb,
  year    = {2006}
}

@ARTICLE{attiah2026,
  author={Attiah, Kareem M. and Yu, Wei},
  journal= {IEEE J. Sel. Areas Inf. Theory}, 
  title={Uplink--Downlink Duality for Beamforming in Integrated Sensing and Communications}, 
  year={2026},
  volume={7},
  number={},
  pages={354-372},
  month={May}}

@ARTICLE{Yu_BC,
  author={Wei Yu and Cioffi, J.M.},
  journal = {IEEE Trans. Inf. Theory},
  title={Sum capacity of {Gaussian} vector broadcast channels}, 
  year={2004},
  volume={50},
  number={9},
  pages={1875-1892}
}

@inproceedings{ShenAttiahChenYu2006,
  author    = {Kaiming Shen and Khaled M. Attiah and Yihong Chen and Wei Yu},
  title     = {Connections Between Quadratic Transform for Fractional Programming and {Schur} Complement},
  booktitle = IEEE_ISIT,
  month     = jun,
  year      = {2026}
}

@ARTICLE{Sriram_BC,
  author={Vishwanath, S. and Jindal, N. and Goldsmith, A.},
  journal = {IEEE Trans. Inf. Theory},
  title={Duality, achievable rates, and sum-rate capacity of {Gaussian MIMO} broadcast channels}, 
  year={2003},
  volume={49},
  number={10},
  pages={2658-2668}
}

@ARTICLE{Pramod_BC,
  author={Viswanath, P. and Tse, D.N.C.},
  journal = {IEEE Trans. Inf. Theory},
  title={Sum capacity of the vector {Gaussian} broadcast channel and uplink–downlink duality}, 
  year={2003},
  volume={49},
  number={8},
  pages={1912-1921}
}

@ARTICLE{Sato_BC,
  author={Sato, H.},
  journal = {IEEE Trans. Inf. Theory},
  title={An outer bound to the capacity region of broadcast channels}, 
  year={1978},
  volume={24},
  number={3},
  pages={374-377}
}

@book{boyd2004convex,
  title={Convex optimization},
  author={Boyd, Stephen and Vandenberghe, Lieven},
  year={2004},
  publisher={Cambridge University Press}
}

@INPROCEEDINGS{10097000,
  author={Zhu, Minghe and Li, Lei and Xia, Shuqiang and Chang, Tsung-Hui},
  booktitle=IEEE_ICASSP,
  title={Information and Sensing Beamforming Optimization for Multi-User Multi-Target {MIMO ISAC} Systems}, 
  year={2023},
  volume={},
  number={},
  pages={},
  month = jun,
  doi={10.1109/ICASSP49357.2023.10097000}}

@article{max_min_FP,
  author={Chen, Yannan and Zhao, Licheng and Shen, Kaiming},
  journal=IEEE_J_SP, 
  title={Mixed Max-and-Min Fractional Programming for Wireless Networks}, 
  year={2023},
  volume={72},
  number={},
  pages={337-351},
  month=dec
}

@article{weizhang_TWC,
  author={Y. Tian and D. Wang and C. Huang and W. Zhang},
  journal=IEEE_J_WCOM, 
  title={Performance Trade-Off of Integrated Sensing and Communications for Multi-User Backscatter Systems}, 
  year={2024},
  volume={23},
  number={11},
  pages={17310-17323},
  month=nov
}

@article{Yuan_IOTJ,
  author={P. Liu and S. Xu and S. Tang and X. Wang and F. Xia and W. Yuan},
  journal=IEEE_J_IOT, 
  title={Sensing-Assisted Secure Communications: A Rate-Splitting Approach}, 
  year={2025},
  volume={12},
  number={20},
  pages={42876-42890},
  month=oct
}

@article{Wang_TWC,
  author={M. Yuan and D. He and H. Yin and H. Wang and F. Liu and Z. Wang},
  journal=IEEE_J_WCOM, 
  title={Hybrid Beamforming for mm{W}ave Integrated Sensing and Communication With Multi-Static Cooperative Localization}, 
  year={2025},
  volume={25},
  number={},
  pages={771-786},
  month=jul
}

@article{Ng_TWC,
  author={J. Zou and S. Sun and C. Masouros and Y. Cui and Y.-F. Liu and D. Ng},
  journal=IEEE_J_COM, 
  title={Energy-Efficient Beamforming Design for Integrated Sensing and Communications Systems}, 
  year={2024},
  volume={72},
  number={6},
  pages={3766-3782},
  month=jun
}

@article{FP_SPM,
  author={K. Shen and W. Yu},
  journal=IEEE_M_SP, 
  title={Quadratic Transform for Fractional Programming in Signal Processing and Machine Learning}, 
  year={2025},
  volume={42},
  number={3},
  pages={14-34},
  month=may
}

@article{dinkelbach1967nonlinear,
  title={On nonlinear fractional programming},
  author={Dinkelbach, W.},
  journal={Manage. Sci.},
  volume={13},
  number={7},
  pages={492--498},
  month=Mar,
  year={1967}
}

@article{shen2018fractional1,
  author = {Shen, K. and Yu, W.},
  journal = IEEE_J_SP,
  number = {10},
  pages = {2616--2630},
  title = {Fractional programming for communication systems---{Part} {\uppercase\expandafter{\romannumeral1}}: Power control and beamforming},
  volume = {66},
  month = Mar,
  year = {2018}
  }

@inproceedings{Attiah_ISIT24,
author = {K. M. Attiah and Wei Yu},
booktitle =IEEE_ISIT,
pages = {},
month =jul,
title = {Beamforming Design for Integrated Sensing and Communications Using Uplink--Downlink Duality},
year = {2024}
}

@article{Sion_minimax,
author = {M. Sion},
journal ={Pacific J. Math.},
pages = {171-176},
title = {On general minimax theorems},
year = {1958}
}

@article{Schur,
author = {J. Schur},
journal ={J. Reine Angew. Math.},
pages = {205-232},
title = {\"{U}ber {P}otenzreihen, die im {I}nneren des {E}inheitskreises beschr\"{a}nkt sind},
year = {1917}
}

@book{Fuzhen_book,
  title={The {S}chur Complement and Its Applications},
  author={F. Zhang},
  publisher={Springer},
  year={2005}
}

@book{Wilde_book,
  title={Quantum Information Theory},
  author={M. M. Wilde},
  edition={2nd},
  publisher={Cambridge University Press},
  year={2017}
}

@article{WZ_TIT,
  title={The rate--distortion function for source coding with side information at the decoder},
  author={A. Wyner and J. Ziv},
  journal=IEEE_J_IT,
  volume={22},
  number={1},
  pages={1--10},
  year={1976},
  month=jan
}

@article{Guo_TIT,
  title={Mutual information and minimum mean-square error in {G}aussian channels},
  author={D. Guo and S. Shamai and S. Verd\'{u}},
  journal=IEEE_J_IT,
  volume={51},
  number={4},
  pages={1261--1282},
  year={2005},
  month=apr
}

@inproceedings{WY_ISIT,
  title={Writing on colored paper},
  author={W. Yu and Sutivong, A. and Julian, D. and Cover, T.M. and M. Chiang},
  booktitle=IEEE_ISIT,
  pages={322},
  year={2001},
  month=jun
}

\end{document}